\documentclass[letterpaper, 10 pt, conference]{ieeeconf}  % Comment this line out if you need a4paper

\IEEEoverridecommandlockouts                              % This command is only needed if 
\usepackage{amsmath} % assumes amsmath package installed
\usepackage{amssymb}  % assumes amsmath package installed

\usepackage{mleftright}

\newtheorem{theorem}{Theorem}
\newtheorem{proposition}{Proposition}
\newtheorem{lemma}{Lemma}
\newtheorem{corollary}{Corollary}

\newtheorem{remark}{Remark}

\newcommand{\R}{\mathbb R}

\newcommand{\dfb}{\stackrel{\Delta}{=}}

\newcommand{\gensn}[1]{|#1|} % Notation for general consensus seminorm (on matrices)
\newcommand{\vecsn}[1]{|#1|} % Notation for general consensus norm (on column vectors)
\newcommand{\restnorm}[1]{\|#1\|_R} % Notation for matrix norm on restricted (quotient) space
\DeclareMathOperator{\coe}{coe} % Notation for Markov-Dobrushin coefficient of ergodicity

\newcommand{\consubsp}{\mathcal{I}} % Notation for space spanned by a column vector of ones
\newcommand{\eqrowsum}{\mathcal{M}} % Notation for space containing all matrices of the form 1c'
\newcommand{\stoc}{\mathbb{S}} % Notation for set of all stochastic matrices
\newcommand{\stocpd}{\mathcal{R}} % Notation for set of all stochastic matrices with all diagonal entries positive and rooteg graphs
\newcommand{\stocsary}{\mathcal{K}} % Notation for set of all Sarysmsakov matrices
\newcommand{\stocscr}{\mathcal{S}} % Notation for set of all scrambling matrices
\newcommand{\dblstoc}{\mathbb{S}_D} % Notation for set of all stochastic matrices
\newcommand{\dblstocpd}{\mathcal{D}} % Notation for set of all doubly stochastic matrices with positive diagonal entries and weakly connected graphs

\newcommand{\jsr}{\rho}
\newcommand{\semijsr}{\sigma}

\title{\LARGE \bf
On Determining the Convergence Rate of an Infinite Product of Stochastic Matrices*
}

\author{Ron Ofir and A. Stephen Morse% <-this % stops a space
\thanks{*This work was supported in part by the Air Force Office of Scientific Research, under award numbers FA9550-23-1-0175 and FA9550-25-1-0223. The work of R. Ofir was partially supported by the Viterbi Fellowship, Technion.}% <-this % stops a space
\thanks{Both authors are with the Department of Electrical and Computer Engineering, Yale University, CT, USA ({\tt\small \{ron.ofir,as.morse\}@yale.edu})}%
}

\begin{document}

\maketitle
\thispagestyle{empty}
\pagestyle{empty}

%%%%%%%%%%%%%%%%%%%%%%%%%%%%%%%%%%%%%%%%%%%%%%%%%%%%%%%%%%%%%%%%%%%%%%%%%%%%%%%%
\begin{abstract} 

By a convergent set is meant a set of stochastic matrices where every infinite product of matrices from every compact subset converges to a rank one matrix. Well-known examples include the set of all scrambling matrices, the set of all stochastic matrices with all diagonal entries positive and a rooted graph, the set of all Sarymsakov matrices, and the set of doubly stochastic matrices with positive diagonal entries and a weakly connected graph. It is known that every infinite product from each compact set of every convergent set converges to its limit exponentially fast, but not much is known about the rate of convergence when not all matrices involved are scrambling matrices.
This paper deals with bounding the rate of convergence in convergent sets using submultiplicative seminorms. It is shown that only in some convergent sets all matrices are contractions in the same seminorm, and in particular that this method cannot be used to determine the convergence rate for the class of matrices with positive diagonal entries and a rooted graph. As a second contribution, it is shown that for every compact convergent set and every submultiplicative seminorm, there is a finite number~$k$ such that all products of~$k$ matrices from the set are contractions in the seminorm. Finally, several open questions are posed for future research.

\end{abstract}

%%%%%%%%%%%%%%%%%%%%%%%%%%%%%%%%%%%%%%%%%%%%%%%%%%%%%%%%%%%%%%%%%%%%%%%%%%%%%%%%
\section{INTRODUCTION}

Consensus protocols are key to many distributed control, estimation, and optimization algorithms ~\cite{Bullo2009DistRobots,Jadbabaie2003Coordination,Nedic2009DistributedSubgradient,OlfatiSaber2007ConsensusAndCooperation,Tsitsiklis1984Thesis}. 
The analysis of such algorithms typically involves the study of the convergence properties of infinite products of stochastic matrices from some subset of the set of all $n \times n$ stochastic matrices $\stoc^{n \times n}$. Two questions invariably arise: 1. Does the infinite product of interest converge? 2. If the infinite product converges, at what rate does convergence take place? The first question is typically addressed by exploiting subsets of $\stoc^{n \times n}$ which are known to be ``convergent.'' A subset~$\mathcal{C} \subset \stoc^{n \times n}$ is ~\emph{convergent} if for each compact subset~$\mathcal{\bar C} \subset \mathcal{C}$, every product of infinitely many matrices from~$\mathcal{\bar C}$ converges to a rank one matrix, i.e., for every sequence~$S_0,S_1,\ldots \in \mathcal{\bar C}$,
\begin{equation*}
    \lim_{t \to \infty} S_t \cdots S_0 = \mathbf{1} c
\end{equation*}
for some~$c \in \R^{1 \times n}$. There are several well-known classes within~$\stoc^{n \times n}$ which are convergent including the set of all scrambling matrices\footnote{By a~\emph{scrambling matrix} is meant a stochastic matrix with no pair of orthogonal rows.}~$\stocscr$~\cite{Hajnal1958}, the set~$\stocpd$ of all stochastic matrices with positive diagonals and rooted graphs\footnote{A directed graph is \emph{rooted} if it contains a spanning tree.}~\cite{Cao2008ConsensusGraphical}, the set of all Sarymsakov matrices~$\stocsary$~\cite{Xia2019GenSary}, and the set~$\dblstocpd$ of all doubly stochastic matrices with positive diagonals and weakly connected graphs~\cite{Liu2011DetGossip}. It is known that~$\stocpd$ is actually the largest convergent subset of the set of all~$n \times n$ stochastic matrices with positive diagonals~\cite{Cao2008ConsensusGraphical}, and that~$\dblstocpd$ is the largest convergent subset of the set of all~$n \times n$ doubly stochastic matrices with positive diagonals~\cite{Liu2011DetGossip}. It is also known that~$\dblstocpd \subset \stocpd \subset \stocsary$.

One way to obtain a convergence rate for any infinite product of stochastic matrices from a compact subset~$\mathcal{\bar C}$ is to use a 
submultiplicative seminorm defined on a suitably defined subspace of $\mathbb{R}^{n\times n}$ containing~$\mathcal{C}$\footnote{A seminorm has the same defining requirements as a matrix norm, except that the seminorm of a matrix equaling zero need not imply that the matrix is zero.}. For such an approach to work, every matrix in~$\mathcal{C}$ would have to have a seminorm value less than 1 (i.e., be a contraction in the seminorm). Such seminorms exist for~$\stocscr$ and~$\dblstocpd$. For example, consider the well-known \emph{coefficient of ergodicity} of any matrix in~$M$ in the subspace~$\mathcal{M}$ of all matrices in~$\R^{n \times n}$ with equal row sums; that is
\begin{equation}\label{eq:coe}
    \coe(M) \dfb .5 \max_{i,j \in \mathbf{n}} \sum_{k \in \mathbf{n}} |M_{ik} - M_{jk}|,
\end{equation}
where~$\mathbf{n} = \{1,\dots,n\}$. It turns out that~$\coe(M)$ is also the value at~$M$ of a suitably defined submultiplicative seminorm~$\gensn{\cdot}_\infty$ defined on~$\mathcal{M}$; moreover~$\gensn{S}_\infty < 1$ if and only if~$S \in \stocscr$~\cite{Ofir2025ConsensusSeminorms,DePasquale2024seminorms}. A similar situation holds for~$\dblstocpd$. In this case it is known that the second largest singular value of any doubly stochastic matrix in~$\mathcal{D}$ is less than one~\cite{Liu2011DetGossip}; in addition for any matrix~$S \in \stoc^{n \times n}$ the second largest singular value of~$M$ is known to be the value at~$M$ of a suitably defined submultiplicative seminorm~$\gensn{\cdot}_2$ defined on~$\mathcal{M}$~\cite{Liu2011DetGossip}.

In light of the convergent sets and seminorms mentioned above, it is natural to expect that for every convergent set there is a submultiplicative consensus seminorm such that all matrices in the set are contractions. It was shown in~\cite{Ofir2025ConsensusSeminorms} that this is not the case. To be exact, it was shown that there is no convergent set larger than the set of scrambling matrices in the partial order induced by inclusions such that all matrices in the set are contractions in the same submultiplicative consensus seminorm. Here, we extend this result and prove the following.
\begin{theorem}\label{thm:no_R_K_contraction}
    Fix~$n > 2$. There is no submultiplicative consensus seminorm such that~$\gensn{S} < 1$ for all~$S \in \stocpd \subset \stocsary$.
\end{theorem}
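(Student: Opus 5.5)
The plan is to exploit two facts about any submultiplicative consensus seminorm $\gensn{\cdot}$ on $\eqrowsum$. First, it is continuous, since it is a seminorm on a finite-dimensional space. Second, its kernel is exactly the set of rank-one matrices $\mathbf{1}c$; I take this as the defining "consensus" property from \cite{Ofir2025ConsensusSeminorms}. Suppose, for contradiction, that $\gensn{S}<1$ for every $S\in\stocpd$.

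\emph{Step 1: every stochastic matrix has seminorm at most one.} Every $B\in\stoc^{n\times n}$ is a limit of matrices in $\stocpd$. Indeed, $B_\epsilon=(1-\epsilon)B+\epsilon\frac{1}{n}\mathbf{1}\mathbf{1}'$ has all entries positive, so it has a positive diagonal and a complete (hence rooted) graph, and $B_\epsilon\to B$. Continuity then gives $\gensn{B}\le 1$ for all stochastic $B$.

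\emph{Step 2: a product whose powers do not approach consensus.} Pick $A\in\stocpd$ and a stochastic $B$ (not necessarily in $\stocpd$) such that $P=AB$ has two distinct closed classes. Then $P^k$ converges to a limit $L$ that is not of the form $\mathbf{1}c$. For $n=3$ I would take
\begin{equation*}
A=\begin{bmatrix}1&0&0\\ \tfrac12&\tfrac12&0\\ 0&\tfrac12&\tfrac12\end{bmatrix},\qquad
B=\begin{bmatrix}1&0&0\\ 1&0&0\\ 0&0&1\end{bmatrix}.
\end{equation*}
Here $A$ has a positive diagonal and is rooted at node $1$. The product is
\begin{equation*}
AB=\begin{bmatrix}1&0&0\\ 1&0&0\\ \tfrac12&0&\tfrac12\end{bmatrix},
\end{equation*}
in which states $1$ and $3$ are both absorbing. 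Hence the limit $L$ of $(AB)^k$ has row $1$ equal to $e_1'$ and row $3$ equal to $e_3'$, so $L$ is not rank one. For $n>3$ I would embed this example by adding nodes that listen to node $1$ with positive self-weights, which keeps $A$ in $\stocpd$.

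\emph{Step 3: contradiction.} Submultiplicativity and Step 1 give $\gensn{AB}\le\gensn{A}\,\gensn{B}\le\gensn{A}<1$. Therefore
\begin{equation*}
\gensn{(AB)^k}\le\gensn{AB}^k\to 0 .
\end{equation*}
On the other hand, continuity gives $\gensn{(AB)^k}\to\gensn{L}$, and $\gensn{L}>0$ because $L\notin\{\mathbf{1}c\}$ and the kernel of a consensus seminorm is exactly the rank-one consensus matrices. This is a contradiction.

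\emph{Main obstacle.} The crux is confirming that the seminorm's kernel contains no matrices other than $\mathbf{1}c$, so that $\gensn{L}>0$. This should follow from the definition of a consensus seminorm in \cite{Ofir2025ConsensusSeminorms}, as a norm on the quotient of $\eqrowsum$ by the consensus matrices. Failing that, I would argue directly: if a nonzero direction in the quotient had zero seminorm, the seminorm could not certify convergence for the scrambling matrices, contradicting its defining role. The requirement $n>2$ enters only through the example, because with $n=2$ one cannot have a rooted positive-diagonal $A$ and a stochastic $B$ whose product has two absorbing states.
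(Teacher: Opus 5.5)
Your strategy is the right one, but the explicit example in Step 2, on which the whole contradiction rests, does not work. Row 3 of your $AB$ is $(\tfrac12,0,\tfrac12)$, not $e_3'$, so state 3 is not absorbing; it leaks into state 1. In fact the first column of $AB$ is entrywise positive, so $AB$ is scrambling with $\coe(AB)=\tfrac12$. Row 3 of $(AB)^k$ is $(1-2^{-k},0,2^{-k})$, so $(AB)^k\to\mathbf{1}e_1'$. Hence $L$ is rank one and $\gensn{L}=0$. Step 3 then produces no contradiction at all, since $\gensn{(AB)^k}\to 0$ is perfectly consistent with $\gensn{AB}<1$.

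The mistake is in the second row of $B$. Rows 1 and 3 of $A$ are orthogonal: row 1 draws only on node 1, and row 3 draws on nodes 2 and 3. So $B$ must send node 1 to itself and send both nodes 2 and 3 to node 3; this is exactly the construction of Lemma~\ref{lem:neigh}. With
\[
B=\begin{bmatrix}1&0&0\\0&0&1\\0&0&1\end{bmatrix},\qquad AB=\begin{bmatrix}1&0&0\\ \tfrac12&0&\tfrac12\\ 0&0&1\end{bmatrix},
\]
$AB$ is idempotent and not of the form $\mathbf{1}c$. Therefore $0<\gensn{AB}=\gensn{(AB)^k}\le\gensn{AB}^k\to 0$, which is the contradiction you want. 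For $n>3$, padding $B$ with rows $e_1'$ for your added nodes keeps $AB$ idempotent and not rank one.

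With this correction your proof is the paper's argument specialized to $\stocpd$: density and continuity give $\gensn{S}\le 1$ on all stochastic matrices, then submultiplicativity is played against a non-convergent product $XY$. You replace Lemma~\ref{lem:scramb_characterize} with an explicit example and Theorem~\ref{thm:contract_gensn} with a direct power argument.

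Your \emph{Main obstacle} is not actually an obstacle. In this paper a consensus seminorm is defined by $\gensn{M}=0\iff M=\mathbf{1}c$, so $\gensn{AB}>0$ is immediate, and the vague fallback argument should be dropped.
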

In other words, there is no one seminorm which can be used to bound the convergence rate of \emph{every} infinite product of matrices from~$\stocpd$ or~$\stocsary$. 

A different way to obtain bounds on convergence rate of infinite products of matrices from a convergent set is to use the well-known fact that there is a finite number~$k$, which depends only the dimension~$n$, such that for every convergent set~$\mathcal{C}$ the product of each~$k$ matrices from~$\mathcal{C}$ is scrambling~\cite{Wolfowitz1963SIA}. Since the convergence rate for scrambling matrices can be conveniently bounded using the coefficient of ergodicity, this makes it possible to bound the convergence in general convergent sets by studying products of finite but non-trivial length. Still, this approach has been used for example in~\cite{Cao2008ConsensusRate} to derive a bound on the convergence rate for infinite products of so-called flocking matrices. Our second contribution is a preliminary result aimed at generalizing this approach to general consensus seminorms, proving that for every compact convergent set and every submultiplicative consensus seminorm, there is a finite number~$k$ such that the product of every~$k$ matrices from the set is a contraction.

\section{Consensus seminorms}

This section provides a brief review of consensus seminorms. Recall that a seminorm has the same properties as a norm except that it may take zero values on nonzero vector. Let~$\eqrowsum^{n \times n} \subset \R^{n \times n}$ denote the subspace consisting of all $n \times n$ real matrices whose row sums are all equal. We will call a seminorm~$\gensn{\cdot} : \eqrowsum^{n \times n} \to \R$ a~\emph{consensus seminorm} if~$\gensn{M} = 0 \iff M = \mathbf{1}c$ for some~$c \in \R^n$. The reason to restrict the domain of a consensus seminorm to the subspace~$\eqrowsum^{n \times n}$ rather than all of~$\R^{n \times n}$ is explained in detail in~\cite{Ofir2025ConsensusSeminorms}. A consensus seminorm is said to be submultiplicative if~$\gensn{M_1 M_2} \le \gensn{M_1} \gensn{M_2}$ for all~$M_1,M_2 \in \eqrowsum^{n \times n}$. The triangle inequality implies that for consensus seminorm is shift-invariant along rank-one matrices in~$\eqrowsum^{n \times n}$. That is, for every consensus seminorm~$\gensn{\cdot}:\eqrowsum^{n \times n} \to \R$ and every~$M \in \eqrowsum^{n \times n}$,
\begin{equation}\label{eq:shift}
    \gensn{M + \mathbf{1}c} = \gensn{M}
\end{equation}
for every~$c \in \R^{1 \times n}$.

Several computable submultiplicative consensus seminorms have been proposed previously~\cite{Liu2011DetGossip,DePasquale2024seminorms,Ofir2025ConsensusSeminorms}. In this paper we will mostly be dealing with consensus seminorm in general without restricting interest to any specific form. However, two submultiplicative consensus seminorms will play a special role and are therefore explicitly defined here. The first is the seminorm~$\gensn{M}_\infty : \eqrowsum^{n \times n} \to \R$ defined by
\begin{equation}
    \gensn{M}_\infty = \max_{x \in \mathcal{X}} |Mx|_\infty,
\end{equation}
where~$|x|_\infty \dfb \min_{c \in \R} \|x - \mathbf{1}c\|_\infty$, and~$\mathcal{X} \dfb \{x \in \R^n : |x|_\infty = 1\}$. The second is the seminorm~$\gensn{M}_2 : \eqrowsum^{n \times n}$ defined by
\begin{equation}
    \gensn{M}_2 = \max_{c \in \R^{1 \times n}} \|M - \mathbf{1} c\|_2.
\end{equation}
The special role these seminorms play is due to the following results which appeared in~\cite{Liu2011DetGossip,Ofir2025ConsensusSeminorms,DePasquale2024seminorms}.
\begin{proposition}
    For every stochastic matrix~$S \in \stoc^{n \times n}$,
    \begin{equation}
        \gensn{S}_\infty = \coe(S),
    \end{equation}
    and so~$\gensn{S} < 1$ if and only if~$S$ is scrambling.

    For every doubly stochastic matrix~$S \in \dblstoc^{n \times n}$,
    \begin{equation}
        \gensn{S}_2 = \sigma_2(S),
    \end{equation}
    where~$\sigma_2(S)$ denotes the second largest singular value of~$S$. Furthermore, if all diagonal entries of~$S$ are positive, then~$\gensn{S}_2 < 1$ if and only if the graph of~$S$ is weakly connected.
\end{proposition}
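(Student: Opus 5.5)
The proposition has two parts, and I would prove each directly from the definitions of $\gensn{\cdot}_\infty$ and $\gensn{\cdot}_2$. For the $\infty$ part, the key observation is that $|x|_\infty = \min_c\|x-\mathbf{1}c\|_\infty = \tfrac12(\max_k x_k - \min_k x_k)$. Hence for a stochastic $S$ (indeed for any $M \in \eqrowsum^{n\times n}$) I would write
\begin{equation*}
|Sx|_\infty = \tfrac12 \max_{i,j\in\mathbf n} (S_{i\cdot}-S_{j\cdot})x ,
\end{equation*}
where $S_{i\cdot}$ is the $i$th row. Since each difference $S_{i\cdot}-S_{j\cdot}$ has zero entry sum, the value is unchanged by shifting $x$ along $\mathbf 1$. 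So the maximization over $\mathcal X$ can be replaced by a maximization over $x\in[-1,1]^n$.

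For fixed $i,j$ the linear function $(S_{i\cdot}-S_{j\cdot})x$ is maximized over the cube at $x_k=\mathrm{sign}(S_{ik}-S_{jk})$, with value $\sum_k|S_{ik}-S_{jk}|$. Such an $x$ lies in $\mathcal X$ whenever the two rows differ; if all rows coincide, both sides are $0$. This gives $\gensn{S}_\infty=\coe(S)$. The scrambling characterization then follows from the identity, valid for probability vectors $a,b$,
\begin{equation*}
\tfrac12\sum_k|a_k-b_k| = 1-\sum_k\min(a_k,b_k).
\end{equation*}
Thus $\coe(S)<1$ if and only if every pair of rows shares a column where both are positive, that is, no two rows are orthogonal.

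For the $2$ part I would read the seminorm as $\min_c\|M-\mathbf1c\|_2$, which is the reading that yields a consensus seminorm. Let $S$ be doubly stochastic and $J=\mathbf1\mathbf1'/n$. For the upper bound, take $c=\mathbf1'/n$. Because $\mathbf 1$ is both a right and a left singular vector of $S$ with singular value $1$, the matrix $S-J$ has the singular values of $S$ with that $1$ replaced by $0$, so $\|S-J\|_2=\sigma_2(S)$. For the lower bound, for any $c$ and any unit $y\perp\mathbf1$ we have $\|S-\mathbf1c\|_2\ge\|(S-\mathbf1c)'y\|=\|S'y\|$. Since $SS'$ leaves $\mathbf1^\perp$ invariant and has eigenvalue $1$ on $\mathbf1$, its largest eigenvalue on $\mathbf1^\perp$ is $\sigma_2(S)^2$. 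Choosing $y$ to be the corresponding eigenvector gives the bound.

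For the contraction criterion, $\sigma_2(S)<1$ holds if and only if the eigenvalue $1$ of the symmetric doubly stochastic matrix $S'S$ is simple. The main step is to relate the graph of $S'S$ to that of $S$. Positive diagonals give
\begin{equation*}
(S'S)_{ij}\ge S_{ii}S_{ij}+S_{ji}S_{jj},
\end{equation*}
so every edge of $S$, in either direction, is an edge of $S'S$. Conversely, every edge of $S'S$ joins two nodes with a common in-neighbor in $S$, so the weak components coincide.

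If $S$ is weakly connected, then $S'S$ is irreducible with positive diagonal, hence primitive. Perron–Frobenius then makes $1$ simple, so $\sigma_2<1$. If $S$ is not weakly connected, then after a permutation $S$ is block diagonal with doubly stochastic blocks. The block-indicator vectors then give two orthogonal singular vectors with singular value $1$, so $\sigma_2=1$.

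I expect this last graph-theoretic equivalence to be the only step needing care. The rest is routine linear algebra.
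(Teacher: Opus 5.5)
Your proof is correct. It does not compete with an argument in the paper, because the paper gives no proof of this proposition: it imports $\gensn{S}_\infty=\coe(S)$ from the Ofir--Morse and DePasquale--Morse seminorm papers, and the doubly stochastic statements, including the weak-connectivity criterion, from Liu et al.\ on deterministic gossiping.

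Your route is self-contained. It uses the midpoint formula for $|x|_\infty$ and an extreme-point argument on the cube. It then squeezes $\min_c\|S-\mathbf{1}c\|_2$ between $\|S-\tfrac1n\mathbf{1}\mathbf{1}'\|_2$ and $\|S'y\|_2$ for $y\perp\mathbf{1}$. Finally it compares the graphs of $S$ and $S'S$ and applies Perron--Frobenius. This buys two things the citation does not:
\begin{itemize}
\item It forces the correct reading of the printed definition of $\gensn{\cdot}_2$. With $\max_c$ the quantity is infinite, since $\|M-\mathbf{1}c\|_2\ge\sqrt{n}\,\|c\|_2-\|M\|_2$. So $\min_c$, as you assumed, is the only sensible interpretation.
\item It shows slightly more than is stated. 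The identity $\gensn{M}_\infty=\coe(M)$ holds for every $M\in\eqrowsum^{n\times n}$. The direction ``not weakly connected $\Rightarrow\sigma_2(S)=1$'' needs no positive-diagonal hypothesis.
\end{itemize}
What the citation buys is brevity and the surrounding seminorm framework, such as submultiplicativity, which this proposition does not itself require.

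When you write it up, make a few points explicit:
\begin{itemize}
\item Both your upper bound and your lower bound use $\sigma_1(S)=1$, i.e.\ $\|S\|_2\le 1$, which follows from $\|S\|_2^2\le\|S\|_1\|S\|_\infty=1$. This is what guarantees that discarding the singular value $1$ carried by $\mathbf{1}$ leaves exactly $\sigma_2(S)$ on top.
\item Phrase that step via the orthogonal splitting $\mathrm{span}(\mathbf{1})\oplus\mathbf{1}^\perp$, which both $S$ and $S'$ preserve. This removes any ambiguity in the SVD when $1$ is a repeated singular value.
\item Irreducibility of $S'S$ already makes its Perron root simple, so primitivity is more than you need.
\end{itemize}
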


The main important result regarding consensus seminorms is the relation between contraction in a submultiplicative consensus seminorm and exponential convergence to a rank-one matrix.
\begin{theorem}[\cite{Ofir2025ConsensusSeminorms}]\label{thm:contract_gensn}
    Let~$\gensn{\cdot} : \eqrowsum^{n \times n} \to \R$ be a submultiplicative consensus seminorm, and let~$\mathcal{C}$ be a compact subset of matrices in~$\eqrowsum^{n \times n}$ which are all contractive in~$\gensn{\cdot}$ and whose row sums are all equal one. Let
    \begin{equation}\label{eq:gensn_lambda}
        \lambda \dfb \max_{\mathcal{C}} \gensn{M}
    \end{equation}
    Then for each infinite sequence of matrices~$M_1,M_2,\dots$ in~$\mathcal{C}$, the matrix product~$M_i M_{i-1} \cdots M_1$ converges as~$i \to \infty$ to a rank one matrix of the form~$\mathbf{1}c$ as fast as~$\lambda^i$ converges to zero.
\end{theorem}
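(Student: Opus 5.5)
The plan is to show that the partial products $P_i \dfb M_i M_{i-1}\cdots M_1$ form a Cauchy sequence in an ordinary matrix norm, with increments bounded by a constant times $\lambda^i$. First we record some preliminaries. Every $M\in\mathcal{C}$ satisfies $M\mathbf{1}=\mathbf{1}$. Products of matrices in $\eqrowsum^{n\times n}$ stay in $\eqrowsum^{n\times n}$, so every $P_i\in\eqrowsum^{n\times n}$. A seminorm on a finite-dimensional space is continuous, so the maximum in (\ref{eq:gensn_lambda}) is attained on the compact set $\mathcal{C}$. Since every element of $\mathcal{C}$ is a contraction, this gives $\lambda<1$. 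Submultiplicativity then yields
\[
\gensn{P_i}\le \gensn{M_i}\cdots\gensn{M_1}\le \lambda^i .
\]

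The key step converts this seminorm bound into a bound on the distance from $P_i$ to the rank-one matrices $\{\mathbf{1}c\}$. Every matrix $\mathbf{1}c$ lies in $\eqrowsum^{n\times n}$. Because $\gensn{\cdot}$ vanishes exactly on $\{\mathbf{1}c\}$ and is shift invariant by (\ref{eq:shift}), it induces a genuine norm on the finite-dimensional quotient space $\eqrowsum^{n\times n}/\{\mathbf{1}c : c\in\R^{1\times n}\}$. The map $M\mapsto\min_c\|M-\mathbf{1}c\|$ is also a norm on this quotient, for any fixed matrix norm $\|\cdot\|$. By equivalence of norms in finite dimensions there is a constant $K$ with $\min_c\|M-\mathbf{1}c\|\le K\gensn{M}$. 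Hence for each $i$ we can choose $c_i$ such that
\[
\|P_i-\mathbf{1}c_i\|\le K\lambda^i .
\]
I expect this step, justifying the quotient-norm equivalence cleanly, to be the main point requiring care. The rest is routine.

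Next, since $(M_{i+1}-I)\mathbf{1}=0$, we have
\[
P_{i+1}-P_i=(M_{i+1}-I)P_i=(M_{i+1}-I)(P_i-\mathbf{1}c_i).
\]
This gives $\|P_{i+1}-P_i\|\le BK\lambda^i$, where $B\dfb\max_{M\in\mathcal{C}}\|M-I\|<\infty$ by compactness. Summing the geometric tail shows that $(P_i)$ is Cauchy, so it converges to some $L$, with
\[
\|P_i-L\|\le \frac{BK}{1-\lambda}\lambda^i .
\]
This is the claimed rate; the degenerate case $\lambda=0$ is trivially included.

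Finally, we identify the limit. The subspace $\eqrowsum^{n\times n}$ is closed, so $L\in\eqrowsum^{n\times n}$. By continuity of the seminorm, $\gensn{L}=\lim_{i\to\infty}\gensn{P_i}\le\lim_{i\to\infty}\lambda^i=0$. Since $\gensn{\cdot}$ is a consensus seminorm, $\gensn{L}=0$ forces $L=\mathbf{1}c$ for some $c$. Hence the limit is a rank-one matrix of the required form, approached as fast as $\lambda^i$.
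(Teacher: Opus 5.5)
Your proof is correct. The paper does not prove this theorem itself; it quotes it from \cite{Ofir2025ConsensusSeminorms}. The right comparison is therefore the appendix proof of Lemma~\ref{lem:expconverge}, which reaches the same conclusion from the weaker hypothesis $\gensn{S_t\cdots S_1}\le c\delta^t$.

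There the authors proceed as follows:
\begin{itemize}
\item They fix $P$ with $\ker P=\consubsp$ and use Lemma~\ref{lem:restnorm} to turn $\gensn{\cdot}$ into a norm $\restnorm{\cdot}$ on $\R^{(n-1)\times(n-1)}$.
\item They conjugate by an invertible $T$, making each $S_i$ block upper triangular with diagonal blocks $1$ and $\bar S_i$.
\item The product's lower-right block $\bar S_t\cdots\bar S_1$ decays geometrically by norm equivalence.
\item The upper-right block $u(t)=\hat S_1+\sum_{i=2}^t\hat S_i\bar S_{i-1}\cdots\bar S_1$ is shown to be Cauchy, and the limit is read off from $u^*$.
\end{itemize}

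Your argument is a coordinate-free version of the same mechanism. Your norm equivalence on $\eqrowsum^{n\times n}/\{\mathbf{1}c\}$ is exactly Lemma~\ref{lem:restnorm}, because $M\mapsto\bar M$ identifies this $(n-1)^2$-dimensional quotient with $\R^{(n-1)\times(n-1)}$. In $T$-coordinates your increment $(M_{i+1}-I)(P_i-\mathbf{1}c_i)$ has two nonzero blocks. Its upper-right block is $u(i+1)-u(i)=\hat S_{i+1}\bar S_i\cdots\bar S_1$, and its lower-right block is $(\bar S_{i+1}-I)\bar S_i\cdots\bar S_1$. So your single telescoping bound controls both blocks at once.

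What your route buys:
\begin{itemize}
\item It is shorter and needs no auxiliary choice of $P$, $Q$, $T$.
\item It gives a transparent constant $BK/(1-\lambda)$.
\item It proves Lemma~\ref{lem:expconverge} essentially verbatim once $\lambda^i$ is replaced by $c\delta^i$ with $\delta<1$.
\end{itemize}
What the block form buys is an explicit formula for the limiting row vector.

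Two small touches. First, choose $\|\cdot\|$ submultiplicative (for instance an induced norm), since you use $\|AB\|\le\|A\|\,\|B\|$. Second, note that $L\mathbf{1}=\mathbf{1}$ forces $c\mathbf{1}=1$, so $c\neq 0$ and $L$ genuinely has rank one.
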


Let~$P \in \R^{(n-1) \times n}$ be a full-rank matrix whose kernel is the span of~$\mathbf{1}$, denoted~$\consubsp$. For every~$M \in \eqrowsum^{n \times n}$ there exists a unique~$\bar M \in \R^{(n-1) \times (n-1)}$ such that
\begin{equation}\label{eq:Mproj}
    PM = \bar M P.
\end{equation}

The restriction in~\eqref{eq:Mproj} and the shift-invariance implies a relation between consensus seminorms and norms over~$\R^{(n-1) \times (n-1)}$.
\begin{lemma}\label{lem:restnorm}
    Let~$P$ be a full-rank matrix with~$\ker Q = \consubsp$, and let~$\gensn{\cdot} : \eqrowsum^{n \times n} \to \R$ be a consensus seminorm. There exists a norm~$\restnorm{\cdot} : \R^{(n-1) \times (n-1)} \to \R$ such that
    \begin{equation}
        \gensn{M} = \restnorm{\bar M}
    \end{equation}
    for every~$M \in \eqrowsum^{n \times n}$, where~$\bar M \in \R^{(n-1) \times (n-1)}$ is the unique matrix for which~$QM = \bar M Q$.

    Furthermore, if~$\gensn{\cdot}$ is submultiplicative, then so is~$\restnorm{\cdot}$.
\end{lemma}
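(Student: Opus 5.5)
Throughout, $Q$ and $P$ denote the same fixed full-rank $(n-1)\times n$ matrix with $\ker Q = \consubsp$; the statement mixes the two symbols. The plan is to transport $\gensn{\cdot}$ to $\R^{(n-1)\times(n-1)}$ through $Q$ and then check the norm axioms there.

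\emph{Step 1: well-definedness and linearity of $M \mapsto \bar M$.} Fix $M \in \eqrowsum^{n\times n}$ with common row sum $r$. Then $M\mathbf{1} = r\mathbf{1}$, so $QM$ annihilates $\mathbf{1}$. Since $Q$ is surjective with kernel $\consubsp$, the map $QM$ factors uniquely through $Q$, giving a unique $\bar M$ with $QM = \bar M Q$. Uniqueness makes $M \mapsto \bar M$ linear. It is also multiplicative: $Q M_1 M_2 = \bar M_1 Q M_2 = \bar M_1 \bar M_2 Q$, so $\overline{M_1M_2} = \bar M_1 \bar M_2$.

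\emph{Step 2: the kernel and surjectivity of $\phi: M\mapsto \bar M$.}
\begin{itemize}
\item If $\bar M = 0$, then $QM = 0$. Each column of $M$ then lies in $\consubsp$, so $M = \mathbf{1}c$ for some $c$.
\item Conversely, if $M = \mathbf{1}c$, then $QM = 0$ and hence $\bar M = 0$.
\item Given any $A \in \R^{(n-1)\times(n-1)}$, pick a right inverse $Q^+$ of $Q$ (for instance $Q'(QQ')^{-1}$) and set $M = Q^+ A Q$. Since $Q\mathbf{1} = 0$, we get $M\mathbf{1} = 0$, so $M \in \eqrowsum^{n\times n}$. Moreover $QM = AQ$, so $\bar M = A$.
\end{itemize}
Thus $\phi$ is a linear surjection from $\eqrowsum^{n\times n}$ onto $\R^{(n-1)\times(n-1)}$ whose kernel is exactly $\{\mathbf{1}c\}$.

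\emph{Step 3: definition of $\restnorm{\cdot}$.} Set $\restnorm{A} \dfb \gensn{M}$ for any $M$ with $\bar M = A$.
\begin{itemize}
\item \emph{Well-defined.} Two such preimages differ by an element of $\ker\phi$, that is, by some $\mathbf{1}c$. By the shift invariance \eqref{eq:shift} they have the same seminorm.
\item \emph{Homogeneity and triangle inequality.} These transfer from $\gensn{\cdot}$ by linearity of $\phi$: if $\bar M_1 = A_1$ and $\bar M_2 = A_2$, then $M_1+M_2$ is a preimage of $A_1+A_2$, and $\alpha M_1$ is a preimage of $\alpha A_1$.
\item \emph{Definiteness.} If $\restnorm{A} = 0$, then $\gensn{M} = 0$. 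By the consensus property $M = \mathbf{1}c$, so $A = \bar M = 0$.
\end{itemize}
Hence $\restnorm{\cdot}$ is a norm, and $\gensn{M} = \restnorm{\bar M}$ by construction.

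\emph{Step 4: submultiplicativity.} Take preimages $M_1, M_2$ of $A_1, A_2$. By Step 1, $M_1 M_2$ is a preimage of $A_1A_2$, so
\[
\restnorm{A_1A_2} = \gensn{M_1M_2} \le \gensn{M_1}\gensn{M_2} = \restnorm{A_1}\restnorm{A_2}.
\]

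The only delicate point is Step 2. Surjectivity of $\phi$ is what makes $\restnorm{\cdot}$ defined on all of $\R^{(n-1)\times(n-1)}$, and it needs the explicit right-inverse construction. The identification of the kernel with the rank-one matrices $\mathbf{1}c$ is what links shift invariance to well-definedness and links the consensus property to definiteness.
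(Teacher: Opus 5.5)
Your proof is correct and follows essentially the same route as the paper: a preimage $Q^{+}AQ$ built from a right inverse of $Q$, well-definedness via $M_1-M_2=\mathbf{1}c$ plus shift invariance, the norm axioms and submultiplicativity inherited from $\gensn{\cdot}$, and definiteness from the consensus property. The only difference is packaging: you phrase everything through the linear, multiplicative surjection $M\mapsto \bar M$ with kernel $\{\mathbf{1}c\}$, which also makes explicit the uniqueness of $\bar M$ that the paper only asserts.
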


\begin{proof}
    We will show that the function~$\restnorm{\cdot} : \R^{n-1 \times n -1} \to \R$ given by
    \begin{equation}\label{eq:restnorm}
        \restnorm{A} = \vecsn{M},
    \end{equation}
    where~$M \in \eqrowsum^{n \times n}$ is such that~$PM = AP$, is well-defined. In particular, it will be shown that for every~$A \in \R^{(n-1) \times (n-1)}$ there exist infinitely many~$M \in \eqrowsum^{n \times n}$ such that~$PM = AP$, but that the right hand side of~\eqref{eq:restnorm} is the same for every such~$M$.
    
    Since~$P$ has full row rank, it has a right inverse~$P^{-1} \in \R^{n \times n-1}$. In addition, since~$\ker P = \consubsp$, for every~$A \in \R^{n-1 \times n-1}$,~$P^{-1}AP\mathbf{1} = 0$ so~$P^{-1}AP \in \eqrowsum^{n \times n}$, and~$P(P^{-1}AP) = AP$. Fix~$A \in \R^{n-1 \times n-1}$, and suppose~$M_1,M_2 \in \eqrowsum^{n\times n}$ are such that
    \begin{equation}
        AP = PM_1 = PM_2.
    \end{equation}
    Then, since~$\ker P = \consubsp$, we have that~$M_1-M_2 = \mathbf{1} c$ for some~$c \in \R^{1 \times n}$. It follows from the shift-invariance of consensus seminorms that~$\vecsn{M_1} = \vecsn{M_2}$, so the right hand side of~\eqref{eq:restnorm} is the same for any choice of~$M$.

    It remains to show that~$\restnorm{\cdot}$ is indeed a norm. Let~$A_1,A_2 \in \R^{n-1 \times n-1}$ and~$M_1,M_2 \in \eqrowsum^{n \times n}$ such that~$PM_i=A_iP, i=1,2$. Let~$\alpha \in \R$. Then,~$\alpha(A_1 + A_2)P = \alpha P(M_1 + M_2)$, so
    \begin{equation}
        \restnorm{\alpha(A_1 + A_2)} = \vecsn{\alpha(M_1 + M_2)} \le |\alpha|(\vecsn{M_1} + \vecsn{M_2}),
    \end{equation}
    and this proves homogeneity and the triangle inequality. Nonnegativity of~$\restnorm{\cdot}$ is immediate from the nonnegativity of seminorms. It remains only to show that~$\restnorm{A} = 0$ if and only if~$A = 0$. Since~$P \cdot 0 = 0 \cdot P$,~$\restnorm{0} = 0$. Suppose now that~$A \in \R^{n-1 \times n-1}$ is such that~$\restnorm{A} = 0$. This implies that~$AP = P\mathbf{1}c$ for some~$c \in \R^{1 \times n}$. But, since~$P\mathbf{1}c = 0$, we have that~$APP^{-1} = P\mathbf{1}cP^{-1} = 0$, so~$A = 0$.
    
    Finally, suppose that~$\gensn{\cdot}$ is submultiplicative. Let~$A_1,A_2 \in \R^{n-1 \times n-1}$ and~$M_1,M_2 \in \eqrowsum^{n \times n}$ such that~$PM_i=A_iP, i=1,2$. Then,
    \begin{equation}
        (A_1 A_2)P = A_1 PM_2 = PM_1 M_2,
    \end{equation}
    so
    \begin{equation}
        \restnorm{A_1 A_2} = \vecsn{M_1 M_2} \le \vecsn{M_1}\vecsn{M_2} = \restnorm{A_1}\restnorm{A_2}.
    \end{equation}
\end{proof}

\section{Contraction in dense convergent sets}

This section presents a proof for a more technical but more general result than Theorem~\ref{thm:no_R_K_contraction}.
\begin{theorem}\label{thm:contraction_iff}
    Let~$\mathcal{C} \subset \stoc^{n \times n}$ be a convergent set which is dense in~$\stoc^{n \times n}$. If there exists a submultiplicative consensus seminorm~$\gensn{\cdot} : \eqrowsum^{n \times n} \to \R$ such that~$\gensn{S} < 1$ for all~$S \in \mathcal{C}$, then~$\mathcal{C}$ is a subset of the set of all scrambling matrices~$\stocscr$.
\end{theorem}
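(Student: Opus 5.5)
The plan is to argue by contradiction. We show that every stochastic matrix has seminorm at most one. Then we use submultiplicativity to push a non-scrambling $S\in\mathcal{C}$ onto a stochastic matrix whose seminorm is forced to be at least one.

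\textbf{Step 1 (continuity and density).} By Lemma~\ref{lem:restnorm}, $\gensn{M}=\restnorm{\bar M}$, where $\restnorm{\cdot}$ is a norm on the finite-dimensional space $\R^{(n-1)\times(n-1)}$ and $M\mapsto \bar M$ is linear. Hence $\gensn{\cdot}$ is continuous on $\eqrowsum^{n\times n}$. Since $\gensn{S}<1$ on $\mathcal{C}$ and $\mathcal{C}$ is dense in $\stoc^{n\times n}$, continuity gives $\gensn{T}\le 1$ for every $T\in\stoc^{n\times n}$.

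\textbf{Step 2 (an idempotent obstruction).} Let $E\in\stoc^{n\times n}$ be idempotent and not of the form $\mathbf{1}c$. Then $\gensn{E}>0$, and submultiplicativity gives $\gensn{E}=\gensn{E^2}\le\gensn{E}^2$. Therefore $\gensn{E}\ge 1$. Concretely, let $e_k$ denote the $k$th unit row vector and take $E$ with first row $e_1$, second row $e_2$, and all other rows $e_1$. Every row of $E$ is $e_1$ or $e_2$, and rows $1,2$ of $E$ are exactly $e_1,e_2$, so $E^2=E$. Since $E$ has rank two, $\gensn{E}\ge 1$.

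\textbf{Step 3 (reduction of a non-scrambling $S$ to $E$).} Suppose $S\in\mathcal{C}$ is not scrambling. Then two rows $i\neq j$ of $S$ are orthogonal, so their supports $U_i,U_j\subset\mathbf{n}$ are disjoint; both are nonempty because the rows are stochastic.
\begin{itemize}
\item Define the stochastic $B$ by letting row $k$ be $e_2$ for $k\in U_j$ and $e_1$ otherwise. Then row $i$ of $SB$ is $e_1$ and row $j$ is $e_2$.
\item Define the stochastic $A$ with row $2$ equal to $e_j$ and every other row equal to $e_i$.
\end{itemize}
Then $ASB=E$. Combining Step 1 with submultiplicativity,
\[
1\le \gensn{E}\le\gensn{A}\,\gensn{S}\,\gensn{B}\le\gensn{S}<1,
\]
which is a contradiction. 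Hence every $S\in\mathcal{C}$ is scrambling.

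The only delicate point is Step 1: it needs continuity of a general consensus seminorm, and this is exactly what Lemma~\ref{lem:restnorm} supplies. The rest is the explicit construction in Step 3, which works for any $n\ge 2$. The hypothesis that $\mathcal{C}$ is convergent is not really used beyond guaranteeing the setting.
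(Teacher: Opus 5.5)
Your proof is correct, and after the shared first step it takes a genuinely different route from the paper. The shared step is continuity plus density, giving $\gensn{T}\le 1$ for every stochastic $T$. You justify continuity through Lemma~\ref{lem:restnorm}, while the paper cites a general fact about seminorms.

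\textbf{The paper's route.} It invokes Hajnal's result (Lemma~\ref{lem:scramb_characterize}), proved graph-theoretically in the appendix, to produce a stochastic $Y$ such that the powers of $XY$ do not converge to a rank-one matrix. In effect, its construction makes rows $i$ and $j$ of $XY$ equal to $e_i$ and $e_j$. It then gets the contradiction from $\gensn{XY}\le\gensn{X}<1$ together with the external convergence result, Theorem~\ref{thm:contract_gensn}, applied to the singleton $\{XY\}$.

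\textbf{Your route.} You replace both ingredients with two elementary algebraic facts:
\begin{enumerate}
\item a non-rank-one idempotent has seminorm at least one, since $\gensn{E}=\gensn{E^2}\le\gensn{E}^2$ and $\gensn{E}>0$;
\item a non-scrambling $S$ can be sandwiched into such an idempotent, $E=ASB$, with $A$ and $B$ stochastic.
\end{enumerate}
Your right factor $B$ plays the role of the paper's $Y$: it sends the two orthogonal rows to distinct unit vectors. The extra left factor $A$ then turns the non-convergent matrix into an idempotent. So ``non-convergence'' becomes the trivial identity $E^t=E$, and no convergence theorem is needed.

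\textbf{What each buys.} Your argument is self-contained. It uses neither Theorem~\ref{thm:contract_gensn} nor the appendix lemmas. It also makes plain that the only input is $\gensn{T}\le 1$ on $\stoc^{n\times n}$; as you note, the convergence hypothesis on $\mathcal{C}$ is idle, and it is idle in the paper's proof as well. The paper's argument, in exchange, ties the result to Hajnal's classical characterization of scrambling matrices through convergence of products, which fits its overall narrative about convergence rates.
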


\textit{Proof of Theorem~\ref{thm:no_R_K_contraction}:}
    The convergent sets~$\stocpd$ and~$\stocsary$ are dense in~$\stoc^{n \times n}$. Since~$\stocpd \subset \stocsary$ and~$\stocpd$ contains all positive stochastic matrices\footnote{A stochastic matrix is called \emph{positive} if its entries are all positive.}, it is enough to show that the set of all positive stochastic matrices is dense in~$\stoc^{n \times n}$. Fix an arbitrary stochastic matrix~$S$ and an arbitrary positive stochastic matrix~$\bar S$. Note that for every~$\alpha \in (0,1)$, the stochastic matrix
    \begin{equation}
        S(\alpha) = \alpha \bar S + (1-\alpha)S
    \end{equation}
    is positive. Furthermore,
    \begin{equation}
        \|S - S(\alpha)\|_\infty = \|\alpha S - \alpha \bar S\|_\infty = \alpha \|S - \bar S\|_\infty,
    \end{equation}
    so~$\alpha$ can be chosen such that~$S(\alpha)$ is arbitrarily close to~$S$, and this shows that the set of all positive stochastic matrices is dense in~$\stoc^{n \times n}$.

    Finally, for~$n > 2$,~$\stocpd$ contains non-scrambling matrices, see for example Section 2.5 of~\cite{Cao2008ConsensusGraphical}. Therefore, by Theorem~\ref{thm:contraction_iff} there does not exist a submultiplicative consensus seminorm such that~$\gensn{S} < 1$ for all~$S \in \stocpd$, and since~$\stocpd \subset \stocsary$, the same is true for~$\stocsary$, and this proves Theorem~\ref{thm:no_R_K_contraction}.
    \hspace*{\fill}~\QED

\begin{remark}
    The condition in Theorem~\ref{thm:contraction_iff} is not only necessary but also sufficient. That is, if~$\mathcal{C} \subset \stoc^{n \times n}$ is convergent and is a subset of~$\stocscr$, then there exists a submultiplicative consensus seminorm such that~$\gensn{S} < 1$ for all~$S \in \mathcal{C}$. Indeed, the coefficient of ergodicity is such a seminorm.
\end{remark}

\begin{remark}
    The assumption in Theorem~\ref{thm:no_R_K_contraction} that~$n > 2$ is necessary, since for~$n=2$,~$\stocpd \subset \stocsary = \stocscr$, and the coefficient of ergodicity is a submultiplicative consensus seminorm on~$\stocscr$.
\end{remark}

\begin{remark}
    The assumption in Theorem~\ref{thm:contraction_iff} that~$\mathcal{C}$ is dense in~$\stoc^{n \times n}$ is crucial to the conclusion of the theorem. The next section will present convergent sets which are not subsets of~$\stocscr$ and where all matrices are contractive in the same seminorm, but which are not dense in the set of all stochastic matrices.
\end{remark}

The proof of Theorem~\ref{thm:contraction_iff} relies on an interesting property of non-scrambling matrices due to Hajnal, which appeared as Theorem 1 in~\cite{Hajnal1958}.
\begin{lemma}\label{lem:scramb_characterize}
    Let~$X$ be a non-scrambling stochastic matrix. Then there exists a stochastic matrix~$Y$ such that~$XY$ is not convergent.
\end{lemma}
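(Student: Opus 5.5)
The plan is a direct construction of $Y$ that turns two orthogonal rows of $X$ into two absorbing states of $XY$. I read ``$XY$ is not convergent'' as: the powers $(XY)^t$ do not converge to a rank one matrix $\mathbf{1}c$. Equivalently, the infinite product from the one-element compact set $\{XY\}$ fails to converge to rank one.

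\emph{Step 1: choose the orthogonal rows.} Since $X$ is not scrambling, there are indices $i \neq j$ such that rows $i$ and $j$ of $X$ are orthogonal. Because $X$ is nonnegative, this means their supports $A_i \dfb \{k : X_{ik} > 0\}$ and $A_j \dfb \{k : X_{jk} > 0\}$ are disjoint. Both sets are nonempty, since each row sums to one.

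\emph{Step 2: define $Y$.} Let $e_i$ and $e_j$ be the $i$th and $j$th standard unit row vectors. Set row $k$ of $Y$ equal to $e_j$ if $k \in A_j$, and equal to $e_i$ otherwise; in particular row $k$ is $e_i$ for every $k \in A_i$. Every row of $Y$ is a unit vector, so $Y$ is stochastic.

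\emph{Step 3: compute the relevant rows of $XY$.} Row $i$ of $XY$ is $\sum_{k \in A_i} X_{ik} Y_{k\cdot} = \bigl(\sum_{k\in A_i} X_{ik}\bigr) e_i = e_i$. By the same computation, using that $A_i$ and $A_j$ are disjoint, row $j$ of $XY$ is $e_j$.

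\emph{Step 4: conclude.} States $i$ and $j$ are therefore absorbing for the stochastic matrix $XY$. By induction on $t$, row $i$ of $(XY)^t$ equals $e_i$ and row $j$ equals $e_j$ for every $t \ge 1$: if row $i$ of $(XY)^t$ is $e_i$, then row $i$ of $(XY)^{t+1} = (XY)^t(XY)$ is $e_i (XY) = e_i$. Any limit of $(XY)^t$ would thus have two distinct rows $e_i \neq e_j$, so it cannot be of the form $\mathbf{1}c$. Hence $XY$ is not convergent.

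The only delicate point is the interpretation of ``convergent'' for the single matrix $XY$. If it is instead taken to mean that the compact set $\{XY\}$ is a convergent set in the sense of the introduction, Step 4 gives exactly that conclusion, since the constant sequence $XY, XY, \dots$ is then a non-convergent product. The rest is routine.
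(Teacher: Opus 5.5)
Your proof is correct and uses the same construction as the paper's graph-theoretic proof: rows of $Y$ indexed by the support of row $i$ of $X$ are set to $e_i$ and those indexed by the support of row $j$ are set to $e_j$, which makes $i$ and $j$ absorbing in $XY$ (how you fill the remaining rows does not matter). The only difference is the final step: the paper shows that the graph of $XY$ is not rooted and then invokes a separate lemma that convergent matrices have rooted graphs, whereas you observe directly that rows $i$ and $j$ of $(XY)^t$ remain $e_i$ and $e_j$ for all $t$, which is a slightly more self-contained finish.
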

\begin{remark}
    As stated in~\cite{Hajnal1958}, the converse implication is also true: if~$X$ is a scrambling matrix, then for every stochastic matrix~$Y$ the product~$XY$ is scrambling and hence also convergent. This follows immediately from the fact that~$S \in \stoc^{n \times n}$ is such that~$\coe(S) < 1$ if and only if~$S$ is scrambling, and that~$\coe(S_1S_2) \le \coe(S_1) \coe(S_2)$ for every~$S_1,S_2 \in \stoc^{n \times n}$.
\end{remark}
A proof for Lemma~\ref{lem:scramb_characterize} is provided in the appendix.

\textit{Proof of Theorem~\ref{thm:contraction_iff}:}
    Fix~$\mathcal{C} \subset \stoc^{n \times n}$ to be a convergent set which is dense in~$\stoc^{n \times n}$. Let~$\gensn{\cdot} : \eqrowsum^{n \times n} \to \R$ be such that~$\gensn{S} < 1$ for every~$S \in \mathcal{C}$. Since all seminorms over finite dimensional spaces are continuous~\cite{Goldberg2017SeminormContinuous}, this implies that~$\gensn{S} \le 1$ for every stochastic matrix~$S$.

    Seeking a contradiction, suppose that~$\mathcal{C}$ is not a subset of the set of all scrambling matrices, and let~$X \in \mathcal{C}$ be a non-scrambling matrix. Appealing to Lemma~\ref{lem:scramb_characterize}, let~$Y \in \stoc^{n \times n}$ be such that~$XY$ is not convergent. Since~$\gensn{\cdot}$ is bounded above by 1 on all stochastic matrices, and since~$\gensn{\cdot}$ is submultiplicative,
    \begin{equation}
        \gensn{XY} \le \gensn{X}\gensn{Y} = \gensn{X} < 1.
    \end{equation}
    However, Theorem~\ref{thm:contract_gensn} implies that~$XY$ is convergent, and this is a contradiction.
    \hspace*{\fill}~\QED

An attractive feature of the coefficient of ergodicity in~\eqref{eq:coe} is that determining whether~$\coe(S) < 1$ can be done by considering only which entries of~$S$ are positive and which are equal zero, or equivalently by considering only the graph of~$S$. A corollary of Theorem~\ref{thm:contraction_iff} shows that this property is very much unusual among consensus seminorms. We call a consensus seminorm \emph{graphical} if contractivity of a stochastic matrix in the seminorm depends only on the graph of the matrix.
\begin{corollary}
    Let~$\gensn{\cdot}: \eqrowsum^{n \times n} \to \R$ be a graphical submultiplicative consensus seminorm. Then,~$\gensn{S} \ge 1$ for every non-scrambling stochastic matrix~$S \in \stoc^{n \times n}$.
\end{corollary}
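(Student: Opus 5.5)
The plan is to argue by contradiction and reduce the corollary to Theorem~\ref{thm:contraction_iff}. Suppose $\gensn{\cdot}$ is a graphical submultiplicative consensus seminorm and that some non-scrambling stochastic matrix $X$ satisfies $\gensn{X} < 1$. Let $\mathcal{C}$ be the set of all stochastic matrices that are contractions in $\gensn{\cdot}$. By the graphical property, every stochastic matrix with the same graph as $X$ lies in $\mathcal{C}$. The aim is to show that $\mathcal{C}$ is convergent and dense in $\stoc^{n \times n}$. Theorem~\ref{thm:contraction_iff} would then force $\mathcal{C} \subset \stocscr$, contradicting $X \in \mathcal{C}$.

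Convergence of $\mathcal{C}$ follows from Theorem~\ref{thm:contract_gensn}. Any compact subset $\bar{\mathcal{C}} \subset \mathcal{C}$ consists of stochastic matrices, which have equal row sums equal to one, and each of them is a contraction. Seminorms on finite dimensional spaces are continuous, so the maximum $\lambda$ of $\gensn{\cdot}$ over $\bar{\mathcal{C}}$ is attained and satisfies $\lambda < 1$. Theorem~\ref{thm:contract_gensn} then shows that every infinite product from $\bar{\mathcal{C}}$ converges to a rank one matrix $\mathbf{1}c$.

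Density is the main step. I would first show that every positive stochastic matrix $S$ lies in $\mathcal{C}$. Write $X = \sum_k \alpha_k P_k$ and $S = \sum_k \beta_k P_k$ as convex combinations of elementary products. Alternatively, exploit submultiplicativity directly: the matrix $T = XS'$, where $S'$ is a positive stochastic matrix, is positive and therefore has the same graph as $S$. For any stochastic $S'$ we have $\gensn{S'} \le 1$, because $S'^k \to \mathbf{1}\pi$ along a suitable subsequence (or, more robustly, because of the argument below). This gives $\gensn{T} \le \gensn{X}\gensn{S'} < 1$ as long as $\gensn{S'}$ is finite and at most one. To avoid needing $\gensn{S'} \le 1$ a priori, I would use instead that all positive stochastic matrices share one graph, the complete graph with self loops. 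Then it is enough to exhibit a single positive contraction, say $T_0 = X^m$ perturbed appropriately. The simplest route is $T_0 = \mathbf{1}\pi$ for a positive row vector $\pi$: this matrix is positive and stochastic, and $\gensn{\mathbf{1}\pi} = 0 < 1$ since $\gensn{\cdot}$ is a consensus seminorm. So $\mathbf{1}\pi \in \mathcal{C}$, and by the graphical property every positive stochastic matrix is in $\mathcal{C}$. The density argument from the proof of Theorem~\ref{thm:no_R_K_contraction}, namely that positive stochastic matrices are dense in $\stoc^{n \times n}$, then shows that $\mathcal{C}$ is dense.

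With $\mathcal{C}$ convergent, dense, and consisting of contractions, Theorem~\ref{thm:contraction_iff} gives $\mathcal{C} \subset \stocscr$, contradicting $X \in \mathcal{C}$. The step I expect to need the most care is recognizing that density comes for free from the rank one positive matrix $\mathbf{1}\pi$ together with graphicality. Notice that the hypothesis $\gensn{X} < 1$ for a non-scrambling $X$ is used only for the final contradiction, not in establishing density.
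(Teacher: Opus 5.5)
Your final argument is correct and is essentially the paper's proof. The positive rank-one matrix $\mathbf{1}\pi$ (the paper uses $\frac{1}{n}\mathbf{1}\mathbf{1}'$) has seminorm zero, so graphicality makes every positive stochastic matrix a contraction; density of the positive stochastic matrices then makes $\mathcal{C}$ dense, and Theorem~\ref{thm:contraction_iff} gives $\mathcal{C}\subset\stocscr$. Your contradiction framing is only cosmetic, and your explicit check via Theorem~\ref{thm:contract_gensn} that $\mathcal{C}$ is convergent supplies a hypothesis the paper leaves implicit.

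In a write-up, drop the abandoned aside. The reason it gives for $\gensn{S'}\le 1$, subsequential convergence of $(S')^k$ to $\mathbf{1}\pi$, fails for permutation matrices. The bound itself does hold here after the fact, by density and continuity, as in the proof of Theorem~\ref{thm:contraction_iff}.
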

\begin{proof}
    Let~$\gensn{\cdot}$ be a graphical submultiplicative consensus seminorm and let~$\mathcal{C} = \{S \in \stoc^{n \times n} : \gensn{S} < 1\}$. Since~$\frac{1}{n} \mathbf{1}\mathbf{1}'$ is a rank one matrix,~$\gensn{\frac{1}{n}\mathbf{1}\mathbf{1}'} = 0$. The matrix~$\frac{1}{n} \mathbf{1}\mathbf{1}'$ is a positive stochastic matrix, so by the assumption that the seminorm is graphical, it follows that~$\gensn{S} < 1$ for every positive stochastic matrix~$S$. As shown in the proof of Theorem~\ref{thm:no_R_K_contraction}, this implies that~$\mathcal{C}$ is dense in~$\stoc^{n \times n}$, so Theorem~\ref{thm:contraction_iff} implies that~$\mathcal{C} \subset \stocscr$, and this completes the proof.
\end{proof}

\subsection{Convergent sets of doubly stochastic matrices}

Interestingly, when considering convergent sets whose matrices all doubly stochastic rather than just stochastic, the results are remarkably different. Recall that~$\stocpd$ is the set of all stochastic matrices with positive diagonal entries and a rooted graph, and~$\dblstocpd$ is the set of all doubly stochastic matrices with positive diagonal entries and a weakly connected graph. Both sets are convergent, and both are maximal in a sense:~$\stocpd$ is the largest convergent set of stochastic matrices with positive diagonal entries, and~$\dblstocpd$ is the largest convergent set of stochastic matrices with positive diagonal entries. Theorem~\ref{thm:no_R_K_contraction} states that there does not exist a submultiplicative consensus seminorm for which all matrices in~$\stocpd$ are contractions. In contrast, all matrices in~$\dblstocpd$ are contractions in the seminorm~$\gensn{\cdot}_2$~\cite{Liu2011DetGossip}.

\section{Joint spectral semiradius}

This section studies the convergence rate in \emph{compact} convergent sets. The discussion is based on the joint spectral radius~\cite{Rota1960JSR}, defined as follows. Let~$\Sigma \subset \R^{n \times n}$ be a compact set of real square matrices. For every~$t > 0$, define~$\Sigma^t \subset \R^{n \times n}$ to be the set consisting of all products of~$t$ matrices from~$\Sigma$, i.e.
\begin{equation}
    \Sigma^t = \{ A_1 \cdots A_t : A_1,\dots,A_t \in \Sigma\}.
\end{equation}
The~\emph{joint spectral radius} is defined as
\begin{equation}\label{eq:jsr}
    \jsr(\Sigma) = \lim_{t \to \infty} \sup_{A \in \Sigma^t} \|A\|^{1/t},
\end{equation}
where~$\|\cdot\|$ denotes an arbitrary matrix norm. It is well-known that the limit is well-defined and that it remains the same regardless of choice of matrix norm~\cite{Jungers2009JSR}.

The main important property of the joint spectral is its close relation to convergence of all infinite products of matrices from the given set~\cite{Daubechies1992RCP}. Specifically, all infinite products of matrices from~$\Sigma$ converge to 0 if and only if~$\jsr(\Sigma) < 1$~\cite{Berger1992Semigroups}, and a similar statement can be made regarding convergence in general rather than convergence to 0 in particular~\cite{Daubechies1992RCP,Berger1992Semigroups}.

The goal of this section is to study a similar construction where the norm in~\eqref{eq:jsr} is replaced with a consensus seminorm. While the joint spectral radius has been used before to study convergence of products of stochastic matrices (see~\cite{Daubechies1992RCP} for the first work in this context and~\cite{Kozyakin2019ConsensusJSR} for a recent review), the definition we pursue here and its implications are novel to the best of our knowledge.

Let~$\mathcal{A} \subset \stoc^{n \times n}$ be a compact set of stochastic matrices. For every~$t > 0$, define~$\mathcal{A}^t$ to be the set consisting of all products of~$t$ matrices from~$\mathcal{A}$,
\begin{equation}
    \mathcal{A}^t \dfb \{A_1 \cdots A_t : A_1,\dots,A_t \in \mathcal{A}\}.
\end{equation}
Let~$\gensn{\cdot} : \eqrowsum^{n \times n} \to \R$ be a submultiplicative consensus seminorm, and define the~\emph{joint spectral semiradius} by
\begin{equation}\label{eq:semijsr}
    \semijsr(\mathcal{A}) = \lim_{t \to \infty} \sup_{A \in \mathcal{A}^t} \gensn{A}^{1/t}.
\end{equation}
Proving that the limit is well-defined and that its value does not depend on the choice of seminorm is done exactly the same as for the standard joint spectral radius and is therefore omitted.%A proof showing that the joint spectral semiradius is well-defined for every submultiplicative consensus seminorm, and that its value does not depend on the particular choice of seminorm, is included in the appendix.

The next result relates the joint spectral radius and the convergence rate in compact convergent sets.
\begin{theorem}\label{thm:semijsr}
    Let~$\mathcal{C} \subset \stoc^{n \times n}$ be a compact set of stochastic matrices. Then~$\mathcal{C}$ is convergent if and only if~$\semijsr(\mathcal{C}) < 1$. Furthermore, if~$\semijsr(\mathcal{C}) < 1$, then for every~$\delta > \semijsr(\mathcal{C})$, there exists a constant~$c > 0$ such that for every sequence~$S_1,S_2,\dots \in \mathcal{C}$,
    \begin{equation}
        \|S_t \cdots S_1 - \mathbf{1} d\|_\infty \le c\delta^t, \quad t=1,2,\dots,
    \end{equation}
    where~$\mathbf{1} d = \lim_{t \to \infty} S_t \cdots S_1$.
\end{theorem}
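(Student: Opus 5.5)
The plan is to prove the two directions of the equivalence separately and then obtain the rate bound from the definition of $\semijsr$ together with the argument behind Theorem~\ref{thm:contract_gensn}. Throughout I use that $\semijsr(\mathcal{C})$ does not depend on the submultiplicative consensus seminorm, so I may work with $\gensn{\cdot}_\infty$ whenever convenient. I also use that all norms on $\R^{(n-1)\times(n-1)}$ are equivalent, which through Lemma~\ref{lem:restnorm} makes any two consensus seminorms equivalent on $\eqrowsum^{n\times n}$.

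\emph{($\semijsr(\mathcal{C})<1 \Rightarrow$ convergent, and the rate).} Fix $\delta\in(\semijsr(\mathcal{C}),1)$ and pick $\delta'\in(\semijsr(\mathcal{C}),\delta)$. By the definition of the limit there is $k$ with $\sup_{A\in\mathcal{C}^k}\gensn{A}\le \delta'^k<1$. The set $\mathcal{C}^k$ is compact, being the continuous image of $\mathcal{C}^{\times k}$, and consists of stochastic matrices that are contractions. Theorem~\ref{thm:contract_gensn} applied to $\mathcal{C}^k$ therefore shows that every product of blocks of length $k$ converges to some $\mathbf{1}d$ at rate $\delta'^{kj}$.

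To handle intermediate times, write $t=jk+r$ with $0\le r<k$. Then
\[
\gensn{S_t\cdots S_1}\le \Bigl(\max_{0\le r<k}\sup_{A\in\mathcal{C}^r}\gensn{A}\Bigr)\delta'^{kj}\le c_1\delta^t .
\]
For the limit itself, I would convert the seminorm bound into $\|S_t\cdots S_1-\mathbf{1}d\|_\infty$ as follows. Set $P_t=S_t\cdots S_1$. Then $P_{t+m}-P_t=(S_{t+m}\cdots S_{t+1}-I)P_t$, and since $(Q-I)\mathbf{1}c=0$ for stochastic $Q$, we get $\|P_{t+m}-P_t\|_\infty\le 2\min_c\|P_t-\mathbf{1}c\|_\infty$. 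The quantity $\min_c\|P_t-\mathbf{1}c\|_\infty$ is a consensus seminorm, hence equivalent to $\gensn{\cdot}$. So $P_t$ is Cauchy, its limit $\mathbf{1}d$ is rank one, and letting $m\to\infty$ gives $\|P_t-\mathbf{1}d\|_\infty\le c\,\delta^t$. This also proves convergence of $\mathcal{C}$, since every compact subset $\bar{\mathcal{C}}$ has $\semijsr(\bar{\mathcal{C}})\le\semijsr(\mathcal{C})$.

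\emph{(Convergent $\Rightarrow \semijsr(\mathcal{C})<1$).} This is the main obstacle, since convergence of each individual product must be upgraded to a uniform contraction. I would use the fact cited in the introduction (Wolfowitz): there is $k$, depending only on $n$, such that every product of $k$ matrices from a convergent set is scrambling. So every $A\in\mathcal{C}^k$ has $\gensn{A}_\infty=\coe(A)<1$. Since $\mathcal{C}^k$ is compact and $\coe$ is continuous, $\lambda\dfb\max_{\mathcal{C}^k}\coe<1$.

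Submultiplicativity then gives $\sup_{A\in\mathcal{C}^{jk}}\gensn{A}_\infty\le\lambda^j$, so $\semijsr(\mathcal{C})\le\lambda^{1/k}<1$ by seminorm-independence. If I could not rely on the Wolfowitz result as stated, the fallback would be a compactness argument. Suppose $\semijsr(\mathcal{C})\ge1$. Then for every $t$ some product in $\mathcal{C}^t$ is non-scrambling. Since there are only finitely many graph patterns, a König-type diagonal extraction would yield an infinite sequence in $\mathcal{C}$ all of whose finite products are non-scrambling. Such a product cannot converge to a rank-one matrix, which contradicts convergence. This extraction step is where the real care is needed.
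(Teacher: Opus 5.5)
Your proof is correct, but the hard direction follows a different route from the paper's.

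\textbf{Convergent $\Rightarrow \semijsr(\mathcal{C})<1$.} The paper argues by contrapositive. Through Lemma~\ref{lem:restnorm} it identifies $\semijsr(\mathcal{C})$ with the ordinary joint spectral radius of the quotient set $\bar{\mathcal{C}}=\{\bar S: \bar S P = PS,\ S\in\mathcal{C}\}$. It then invokes Berger and Wang's theorem: if that radius is at least $1$, some product from $\bar{\mathcal{C}}$ does not tend to $0$. Lifting this product back to $\mathcal{C}$ gives a product that cannot tend to a matrix $\mathbf{1}c$.

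You instead use the Wolfowitz fact quoted in the introduction. Its hypothesis holds here: for every finite product $W$ from $\mathcal{C}$, the periodic sequence drawn from a finite (hence compact) subset gives $W^j\to\mathbf{1}c$. You then combine compactness of $\mathcal{C}^k$, continuity of $\coe$, and submultiplicativity of $\gensn{\cdot}_\infty$.

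What each route buys:
\begin{itemize}
\item Your route stays inside stochastic-matrix combinatorics. It gives the explicit estimate $\semijsr(\mathcal{C})\le(\max_{\mathcal{C}^k}\coe)^{1/k}$, with $k$ depending only on $n$, so the K\"onig fallback is unnecessary.
\item The paper's route is purely linear-algebraic and uses no graph structure. It fits the joint-spectral-radius framework of that section, at the cost of importing Berger--Wang.
\end{itemize}

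\textbf{The rate.} Both of you first obtain a uniform bound $\gensn{S_t\cdots S_1}\le c_1\delta^t$. The paper converts it to the $\infty$-norm through the block-triangular similarity in Lemma~\ref{lem:expconverge}. Your conversion uses the estimate $\|P_{t+m}-P_t\|_\infty\le 2\min_c\|P_t-\mathbf{1}c\|_\infty$ together with equivalence of consensus seminorms. It is shorter because it exploits $Q\mathbf{1}=\mathbf{1}$ and $\|Q\|_\infty=1$ for stochastic $Q$ directly. It also makes your appeal to Theorem~\ref{thm:contract_gensn} on $\mathcal{C}^k$ redundant.
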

The proof of Theorem~\ref{eq:semijsr} relies on an auxiliary result, whose proof is given in the appendix.
\begin{lemma}\label{lem:expconverge}
    Let~$\gensn{\cdot} : \eqrowsum^{n \times n} \to \R$ be a submultiplicative consensus seminorm. Let~$S_1,S_2,\dots \in \mathcal{C}$. If there are constants~$c,\delta > 0$ such that
    \begin{equation}\label{cond:semiexpconverge}
        \gensn{S_t \cdots S_1} \le c\delta^t, \quad t=1,2,\dots,
    \end{equation}
    then the infinite product converges to a rank one matrix no slower than~$\delta^t$ converges to zero, i.e.,~$\lim S_t \cdots S_1 = \mathbf{1} d$ for some~$d \in \R^{1 \times n}$, and there are~$\bar c, \delta > 0$ such that
    \begin{equation}
        \|S_t \cdots S_1 - \mathbf{1}d\|_\infty \le \bar c\bar \delta^t, \quad t=1,2,\dots.
    \end{equation}
\end{lemma}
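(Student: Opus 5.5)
We prove Lemma~\ref{lem:expconverge} by showing that the products $P_t \dfb S_t \cdots S_1$ form a Cauchy sequence modulo $\consubsp$. We then use stochasticity to pin down the limit in the full $\|\cdot\|_\infty$ norm. We use the projection of Lemma~\ref{lem:restnorm}. Fix $P$ with $\ker P = \consubsp$ and let $\bar P_t$ be the unique matrix with $P P_t = \bar P_t P$. By Lemma~\ref{lem:restnorm}, $\restnorm{\bar P_t} = \gensn{P_t} \le c\delta^t$. Since all norms on $\R^{(n-1)\times(n-1)}$ are equivalent, $\|\bar P_t\| \le c'\delta^t$ in any convenient operator norm. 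We assume $\delta<1$, as is implicit in the lemma and in its use in Theorem~\ref{thm:semijsr}. Note that the order of the factors matters: $P_{t+1} = S_{t+1}P_t$, so new factors multiply on the left.

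The key step is to control the increments $P_{t+1} - P_t = (S_{t+1} - I)P_t$. The difficulty is that the left factor $S_{t+1}-I$ is not small, while the right factor $P_t$ is only small modulo $\mathbf{1}$. First, we would write $P_t = \mathbf{1} d_t + E_t$ for a suitable choice of row vector $d_t$, with $\|E_t\|_\infty \le c''\delta^t$. This is possible because the quotient seminorm $\min_{d}\|P_t - \mathbf{1}d\|_\infty$ is a consensus seminorm on $\eqrowsum^{n\times n}$. It is therefore bounded by a constant times $\gensn{\cdot}$, since both vanish exactly on $\mathbf{1}c$ and both induce norms on the finite-dimensional quotient (the argument of Lemma~\ref{lem:restnorm} applied twice). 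Then $(S_{t+1}-I)\mathbf{1} d_t = 0$ because $S_{t+1}$ is stochastic. Hence
\begin{equation*}
P_{t+1}-P_t = (S_{t+1}-I)E_t, \qquad \|P_{t+1}-P_t\|_\infty \le 2\|E_t\|_\infty \le 2c''\delta^t,
\end{equation*}
using $\|S_{t+1}\|_\infty = 1$. This is where stochasticity of the $S_i$ is essential.

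Summing the geometric tail shows that $P_t$ is Cauchy and converges to some $L$, with
\begin{equation*}
\|P_t - L\|_\infty \le \sum_{s\ge t} 2c''\delta^s = \frac{2c''}{1-\delta}\,\delta^t.
\end{equation*}
Finally, $\gensn{L}=0$ by continuity of seminorms (as cited in the proof of Theorem~\ref{thm:contraction_iff}) together with $\gensn{P_t}\to 0$. Since $\gensn{\cdot}$ is a consensus seminorm, this gives $L=\mathbf{1}d$. The conclusion therefore holds with $\bar\delta=\delta$ and $\bar c = 2c''/(1-\delta)$.

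The main obstacle is the comparison step, namely bounding $\min_d\|M-\mathbf{1}d\|_\infty$ by a multiple of $\gensn{M}$. I would handle it via Lemma~\ref{lem:restnorm}: both quantities equal norms of $\bar M$ on $\R^{(n-1)\times(n-1)}$, and norms on a finite-dimensional space are equivalent.
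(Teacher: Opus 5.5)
Your proof is correct, but the mechanics differ from the paper's.

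The paper block-triangularizes. With $T$ built from $P$ and an extra row $Q$, each $TS_iT^{-1}$ has diagonal blocks $1$ and $\bar S_i$ and off-diagonal block $\hat S_i$. The off-diagonal block of the product is then the explicit series $u(t)=\hat S_1+\sum_{i=2}^t\hat S_i\bar S_{i-1}\cdots\bar S_1$. This series is shown to be Cauchy using the decay of $\bar S_t\cdots\bar S_1$ together with a uniform bound on the $\hat S_i$, which the paper takes from compactness of $\mathcal{C}$. Finally $d$ is constructed from $u^*=\lim u(t)$, and the estimate is transported back at the cost of the factor $\|T\|_\infty\|T^{-1}\|_\infty$.

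You stay in the original coordinates instead. One application of norm equivalence on the quotient gives $P_t=\mathbf{1}d_t+E_t$ with $\|E_t\|_\infty\le Kc\delta^t$; this is the same step the paper uses to pass from $\restnorm{\cdot}$ to $\|\cdot\|_\infty$. The identity $P_{t+1}-P_t=(S_{t+1}-I)E_t$ then does the work of the paper's product formula.

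What your route buys:
\begin{itemize}
\item It is shorter.
\item It shows that only $S_{t+1}\mathbf{1}=\mathbf{1}$ and $\|S_{t+1}\|_\infty=1$ are needed, so compactness of $\mathcal{C}$ plays no role.
\item It gives $\bar\delta=\delta$ with the explicit constant $2Kc/(1-\delta)$.
\item It identifies the limit as rank one through continuity of $\gensn{\cdot}$ rather than by construction.
\end{itemize}
The paper's route, in exchange, produces $d$ explicitly from $u^*$. It also works throughout with the projected matrices $\bar S_i$, the same objects used in the proof of Theorem~\ref{thm:semijsr}.

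You were right to make $\delta<1$ an explicit assumption. Without it the lemma is false: the constant sequence $S_i=I$ satisfies the hypothesis with $\delta=1$ and $c=\gensn{I}$, yet the product never becomes rank one. The paper's proof also uses $0<\delta<1$, but only tacitly.
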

\textit{Proof of Theorem~\ref{thm:semijsr}:}
    Suppose~$\semijsr(\mathcal{C}) \ge 1$. Fix~$P \in \R^{n-1 \times n}$ with full row rank and the span of~$\mathbf{1}$ as its kernel. Let~$\mathcal{\bar C} \subset \R^{n-1 \times n-1}$ be defined by
    \begin{equation}
        \mathcal{\bar C} \dfb \{ \bar S \in \R^{n-1 \times n-1} : \bar SP = PS, S \in \mathcal{C}\}. 
    \end{equation}
    Let~$\restnorm{\cdot} : \R^{n-1 \times n-1} \to \R$ denote the norm defined in Lemma~\ref{lem:restnorm}. Then for every~$t > 0$,
    \begin{multline*}
        \sup \{\gensn{S_t \cdots S_1}^{1/t} : S_1,\dots,S_t \in \mathcal{C}\}
            = \\ \sup \{\restnorm{\bar S_t \cdots \bar S_1}^{1/t} : \bar S_1,\dots,\bar S_t \in \mathcal{\bar C}\}.
    \end{multline*}
    Taking the limit as~$t$ tends to infinity, the left hand side is the joint spectral semiradius of~$\mathcal{C}$ and right hand side is the joint spectral radius of~$\mathcal{\bar C}$. By a well-known result on the joint spectral radius (see for example item (B) of Theorem 1 of~\cite{Berger1992Semigroups}), there exists a sequence~$\bar S_1, \bar S_2, \dots \in \mathcal{\bar C}$ such that the corresponding infinite product does not converge to 0 (and possibly does not converge at all). Let~$S_1, S_2, \dots \in \mathcal{C}$ be such that~$PS_t = \bar S_tP, t=1,2,\dots$. We claim that the infinite product~$\cdots S_3 S_2 S_1$ does not converge to a rank one matrix. Assume by contradiction that~$\lim_{t \to \infty} S_t \cdots S_1 = \mathbf{1}c$ for some~$c \in \R^{1 \times n}$, then
    \begin{align*}
        0 &= \gensn{\lim_{t \to \infty} S_t \cdots S_1}  \\
          &= \lim_{t \to \infty}\gensn{S_t \cdots S_1} \\
          &= \lim_{t \to \infty}\restnorm{\bar S_t \cdots \bar S_1}.
    \end{align*}
    However, this implies that~$\lim_{t \to \infty} \bar S_t \cdots S_1 = 0$, and that is a contradiction.

    Suppose now that~$\semijsr(\mathcal{C}) < 1$ and fix~$\delta \in (\semijsr(\mathcal{C}),1)$. By definition of the joint spectral semiradius, there exists~$s > 0$ such that for any~$t > s$,
    \begin{equation}
        \sup_{S \in \mathcal{C}^t} \gensn{S}^{1/t} \le \delta,
    \end{equation}
    so
    \begin{equation}
        \sup_{S \in \mathcal{C}^t} \gensn{S} \le \delta^t
    \end{equation}
    Then, for all~$t>0$,
    \begin{equation}\label{eq:expbound}
        \sup_{S \in \mathcal{C}^t} \gensn{S} \le \max\{1, C\} \delta^t
    \end{equation}
    where~$C = \max_{r \in \{1,\dots,s\}}\max_{S \in \mathcal{C}^r} \gensn{S}^{1/r} \delta^{-r}$.

    Finally, applying Lemma~\ref{lem:expconverge} proves that there is~$\bar c > 0$ and~$d \in \R^{1 \times n}$ such that
    \begin{equation}
        \|S_t \cdots S_1 - \mathbf{1}d\|_\infty \le \bar c \delta^t,
    \end{equation}
    and this completes the proof.
    \hspace*{\fill}~\QED

An important fact that was shown as part of the proof of Theorem~\ref{thm:semijsr}, is that for every compact convergent set and for every submultiplicative consensus seminorms, there is a finite number~$k$ such that the product of every~$k$ matrices in the convergent set is a contraction.
\begin{corollary}\label{coro:finite_contract}
    Let~$\mathcal{C} \subset \stoc^{n \times n}$ be a compact convergent set. For every submultiplicative consensus seminorm~$\gensn{\cdot} : \eqrowsum^{n \times n} \to \R$, there is a finite number~$t$ such that for every~$S_1,\dots,S_t \in \mathcal{C}$,
    \begin{equation}
        \gensn{S_t \cdots S_1} < 1.
    \end{equation}
\end{corollary}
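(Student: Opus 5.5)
The plan is to read the corollary off Theorem~\ref{thm:semijsr} together with the fact, noted after~\eqref{eq:semijsr}, that the joint spectral semiradius does not depend on which submultiplicative consensus seminorm is used. Fix an arbitrary submultiplicative consensus seminorm~$\gensn{\cdot}$. Since~$\mathcal{C}$ is compact and convergent, the ``only if'' direction of Theorem~\ref{thm:semijsr} gives~$\semijsr(\mathcal{C}) < 1$, and by seminorm-independence this value is the one computed with the fixed~$\gensn{\cdot}$. Then I would choose~$\delta \in (\semijsr(\mathcal{C}),1)$.

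Next I would apply the definition~\eqref{eq:semijsr} directly. The quantity~$\sup_{S \in \mathcal{C}^t}\gensn{S}^{1/t}$ converges to~$\semijsr(\mathcal{C}) < \delta$. Hence there is an~$s$ such that for every~$t > s$,
\begin{equation*}
\sup_{S \in \mathcal{C}^t}\gensn{S} \le \delta^t < 1.
\end{equation*}
Taking~$t = s+1$, every product~$S_t\cdots S_1$ with~$S_i \in \mathcal{C}$ lies in~$\mathcal{C}^t$, so~$\gensn{S_t \cdots S_1} \le \delta^t < 1$. This is exactly the claim, and in fact it holds for every~$t > s$, not just one.

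The only delicate point is that the limit in~\eqref{eq:semijsr} exists and is seminorm-independent; the paper asserts both without proof. If I wanted to avoid relying on that assertion, I would argue directly. By Lemma~\ref{lem:restnorm}, $\gensn{S_t\cdots S_1} = \restnorm{\bar S_t \cdots \bar S_1}$. The existence of the limit then follows from Fekete's lemma applied to the submultiplicative sequence~$a_t = \sup_{\mathcal{C}^t}\gensn{\cdot}$, whose values are finite because~$\mathcal{C}^t$ is compact and seminorms are continuous. Independence of the seminorm follows from the equivalence of norms on~$\R^{(n-1)\times(n-1)}$: if two norms satisfy~$\restnorm{\cdot} \le K\|\cdot\|'$, the constant disappears under the~$t$-th root.
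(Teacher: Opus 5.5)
Your proposal is correct and is essentially the paper's own argument: the paper reads the corollary off the second half of the proof of Theorem~\ref{thm:semijsr}, where $\semijsr(\mathcal{C})<1$ and a choice of $\delta\in(\semijsr(\mathcal{C}),1)$ give an $s$ with $\sup_{S\in\mathcal{C}^t}\gensn{S}\le\delta^t<1$ for all $t>s$. Your Fekete-lemma and norm-equivalence remarks fill in the well-definedness and seminorm-independence of $\semijsr$ that the paper omits, though independence is not strictly needed here, because the paper's proof of the ``only if'' direction already works for whichever seminorm you fix (via Lemma~\ref{lem:restnorm}).
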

\vspace{0.5em}

\section{CONCLUSIONS}

This paper discussed the use of seminorms in bounding the rate of convergence in convergent sets of stochastic matrices. A previously published result was refined to show that in some important convergent sets, not all matrices are contractions in the same seminorm. In particular, a necessary and sufficient condition for all matrices in a given convergent set to be contractive in the same seminorm was established under the assumption that the convergent set is dense in the set of all stochastic matrices, which is the case for example if the set includes all positive stochastic matrices. Some convergent sets of interest in applications are not dense in the set of all stochastic matrices, such as the set of all doubly stochastic matrices with positive diagonal entries and a weakly connected graph. It would therefore be useful to derive necessary and sufficient conditions for a general convergent set to be such that all matrices in the set are contractions in the same seminorm.

It was also shown that for every compact convergent set, and every seminorm, there is a finite number~$k$ such that every product of~$k$ matrices from the set is a contraction. For the coefficient of ergodicity, it is well-known that~$k$ is bounded above by a function of the dimension~$n$, and this property has been exploited in previous works to derive bounds on the convergence rate for products of flocking matrices. It is currently unknown whether this property is true for every seminorm. Or, alternatively, whether there are seminorms such that for every finite number~$k$, there is a compact convergent set and a product of~$k$ matrices from the set which is~\emph{not} a contraction.

%\addtolength{\textheight}{-12cm}   % This command serves to balance the column lengths
                                  % on the last page of the document manually. It shortens
                                  % the textheight of the last page by a suitable amount.
                                  % This command does not take effect until the next page
                                  % so it should come on the page before the last. Make
                                  % sure that you do not shorten the textheight too much.

%%%%%%%%%%%%%%%%%%%%%%%%%%%%%%%%%%%%%%%%%%%%%%%%%%%%%%%%%%%%%%%%%%%%%%%%%%%%%%%%

%%%%%%%%%%%%%%%%%%%%%%%%%%%%%%%%%%%%%%%%%%%%%%%%%%%%%%%%%%%%%%%%%%%%%%%%%%%%%%%%

%%%%%%%%%%%%%%%%%%%%%%%%%%%%%%%%%%%%%%%%%%%%%%%%%%%%%%%%%%%%%%%%%%%%%%%%%%%%%%%%
\section*{APPENDIX}
\subsection{Proof of Lemma~\ref{lem:scramb_characterize}}

This appendix provides a graph-theoretic proof of Lemma~\ref{lem:scramb_characterize}. A directed graph~$\mathbb{G}$ is a tuple~$(\mathcal{V},\mathcal{A})$ of a set of vertices~$\mathcal{V}$ and a set of arcs~$\mathcal{A}$. If there is an arc pointing from vertex~$i$ to vertex~$j$, denoted by~$(i,j) \in \mathcal{A}$, we say vertex~$i$ is a \emph{neighbor} of vertex~$j$, and vertex~$j$ is a \emph{friend} of vertex~$i$. For a given stochastic matrix~$S \in \stoc^{n \times n}$, the graph of~$S$ is the directed graph with vertex set~$\mathbf{n}$ defined such that~$(i,j) \in \mathcal{A}$ if and only if~$s_{ji} > 0$. A directed graph is then called~\emph{stochastic} if it is the graph of some stochastic matrix. Equivalently, a directed graph is stochastic if and only if every vertex of the graph has at least one neighbor. A \emph{walk} from~$i$ to~$j$ is a sequence of vertices starting with~$i$ and ending at~$j$, such that each vertex is a neighbor of the next vertex in the sequence. A~\emph{path} is a walk where no vertex appears more than once. A vertex of a directed graph is called a~\emph{root} if there is a path from this vertex to all vertices in the graph (including itself). A graph is called~\emph{rooted} if it contains at least one root.

Given two graphs~$\mathbb{G}_1, \mathbb{G}_2$ over the same set of vertices, the composition~$\mathbb{G}_2 \circ \mathbb{G}_1$ is the directed graph over the same set of vertices, where~$(i,j)$ is an arc if and only if there exists a vertex~$k$ such that~$(i,k)$ is an arc in~$\mathbb{G}_1$ and~$(k,j)$ is an arc in~$\mathbb{G}_2$. Note that if~$S_1,S_2$ are two stochastic matrices of the same dimensions and~$\mathbb{G}_1, \mathbb{G}_2$ are their graphs, then the graph of the product~$S_2 S_1$ is exactly the composition~$\mathbb{G}_2 \circ \mathbb{G}_1$.

A directed graph with vertex set~$\mathcal{A}$ is called \emph{neighbor-shared} if for every two vertices~$i,j$, there is a vertex~$k$ (possibly equal to~$i$ or~$j$) such that~$(k,i)$ and~$(k,j)$ are both arcs in~$\mathbb{G}$. That is, the graph of a scrambling stochastic matrix is neighbor-shared.

\begin{lemma}\label{lem:neigh}
    Let~$\mathbb{G}$ be a stochastic directed graph which is not neighbor-shared. There exists a stochastic directed graph~$\mathbb{G}'$, over the same set of vertices, such that~$\mathbb{G} \circ \mathbb{G}'$ is not rooted.
\end{lemma}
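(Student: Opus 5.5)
The plan is to build $\mathbb{G}'$ explicitly, so that in $\mathbb{G}\circ\mathbb{G}'$ two distinct vertices have only themselves as neighbors. Since $\mathbb{G}$ is not neighbor-shared, I will fix vertices $i,j$ for which no vertex $k$ has both $(k,i)$ and $(k,j)$ as arcs of $\mathbb{G}$. Let $\mathcal{N}_i$ and $\mathcal{N}_j$ be the neighbor sets of $i$ and $j$ in $\mathbb{G}$. Three facts follow:
\begin{itemize}
\item The choice of $i,j$ says exactly that $\mathcal{N}_i\cap\mathcal{N}_j=\emptyset$.
\item Since $\mathbb{G}$ is stochastic, both $\mathcal{N}_i$ and $\mathcal{N}_j$ are nonempty.
\item $i\neq j$: if $i=j$, any $k\in\mathcal{N}_i$ would be a common neighbor.
\end{itemize}

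Next I will define $\mathbb{G}'$ on the same vertex set by giving each vertex exactly one neighbor:
\begin{itemize}
\item every $k\in\mathcal{N}_i$ gets the single arc $(i,k)$;
\item every $k\in\mathcal{N}_j$ gets the single arc $(j,k)$;
\item every remaining vertex $k$ gets the arc $(i,k)$, though any choice would do.
\end{itemize}
Because $\mathcal{N}_i$ and $\mathcal{N}_j$ are disjoint, this is well defined. Every vertex has a neighbor, so $\mathbb{G}'$ is stochastic.

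Then I will compute neighbors in the composition. By definition, $(a,b)$ is an arc of $\mathbb{G}\circ\mathbb{G}'$ iff there is a $k$ with $(a,k)\in\mathbb{G}'$ and $(k,b)\in\mathbb{G}$. So the neighbors of $b$ are the $\mathbb{G}'$-neighbors of the vertices in $\mathcal{N}_b$. For $b=i$ this set is exactly $\{i\}$, which is nonempty because $\mathcal{N}_i\neq\emptyset$. Likewise the neighbor set of $j$ is exactly $\{j\}$.

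Finally I will show the composition is not rooted. Any walk ending at $i$ must have every vertex equal to $i$, by induction backwards along the walk. Hence the only vertex with a path to $i$ is $i$ itself, and similarly for $j$. A root would need paths to both $i$ and $j$, so it would equal both, contradicting $i\neq j$. Hence $\mathbb{G}\circ\mathbb{G}'$ is not rooted.

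The only delicate point is the well-definedness of $\mathbb{G}'$, and that is exactly where the disjointness of $\mathcal{N}_i$ and $\mathcal{N}_j$ is used. The rest is bookkeeping with the composition rule.
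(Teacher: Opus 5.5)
Your proposal is correct and is essentially the paper's proof: fix $i,j$ with disjoint neighbor sets $\mathcal{N}_i,\mathcal{N}_j$, make $i$ (resp.\ $j$) the sole $\mathbb{G}'$-neighbor of every vertex of $\mathcal{N}_i$ (resp.\ $\mathcal{N}_j$), and conclude that in $\mathbb{G}\circ\mathbb{G}'$ each of $i$ and $j$ has only itself as a neighbor. The differences are cosmetic: the paper gives the remaining vertices self-loops rather than an arc from $i$, and you spell out two points the paper leaves to the reader, namely $i\neq j$ and the backward-walk argument showing the composition is not rooted.
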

\begin{proof}
    Fix~$i,j$ to be two vertices without a shared neighbor in~$\mathbb{G}$. Partition the set of vertices of~$\mathbb{G}$ into three subsets~$\mathcal{N}_i,\mathcal{N}_j,\mathcal{\bar N}$: the first consisting of all neighbors of~$i$, the second containing all neighbors of~$j$, and the third containing all vertices which are neither a neighbor of~$i$ nor~$j$. Note that the first two groups are non-empty since~$\mathbb{G}$ is stochastic and are disjoint by the assumption that~$i$ and~$j$ do not share a neighbor. In addition,~$\mathcal{\bar N}$ might be empty, but the union~~$\mathcal{N}_i \cup \mathcal{N}_j \cup \mathcal{\bar N}$ contains all vertices in the graph.

    Let~$\mathbb{G}'$ be the graph over the same vertices as~$\mathbb{G}$, defined such that all vertices in~$\mathcal{N}_i$ have~$i$ as their sole neighbor, all vertices in~$\mathcal{N}_j$ have~$j$ as their only neighbor, and that every vertex in~$\mathcal{\bar N}$ has only itself as a neighbor. Note that~$\mathbb{G}'$ is stochastic by construction, since the union~$\mathcal{N}_i \cup \mathcal{N}_j \cup \mathcal{\bar N}$ contains all vertices in the graph.
    
    Let~$k$ be a neighbor of~$i$ in~$\mathbb{G}$, and note that such a~$k$ exists since~$\mathbb{G}$ is stochastic. Then,~$i$ is a neighbor of~$k$ in~$\mathbb{G}'$, and this proves that~$i$ is a neighbor of itself in~$\mathbb{G} \circ \mathbb{G}'$. To show that~$i$ is the only neighbor of~$i$ in~$\mathbb{G} \circ \mathbb{G}'$, suppose that~$k$ is a neighbor of~$i$ in the composition. Then, there must exist a vertex~$l$ which is a friend of~$k$ in~$\mathbb{G}'$ and a neighbor of~$i$ in~$\mathbb{G}$. In particular,~$l \in \mathcal{N}_i$, but then by construction the only friend of~$l$ in~$\mathbb{G}'$ is~$i$, so~$k=i$. Repeating the same argument for~$j$ shows that the only neighbor of~$j$ in~$\mathbb{G} \circ \mathbb{G}'$ is~$j$ itself.
    
    We conclude that~$i$ is the only neighbor of~$i$ in~$\mathbb{G} \circ \mathbb{G}'$, and that~$j$ is the only neighbor of~$j$. Hence,~$\mathbb{G} \circ \mathbb{G}'$ cannot be rooted.
\end{proof}

\begin{lemma}\label{lem:convergent_is_rooted}
    Let~$S \in \stoc^{n \times n}$ and let~$\mathbb{G}$ be its graph. If~$S$ is convergent, then~$\mathbb{G}$ is rooted. 
\end{lemma}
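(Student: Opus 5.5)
We prove the contrapositive: if the graph $\mathbb{G}$ of $S$ is not rooted, then $S$ is not convergent. Here ``$S$ is convergent'' means that the powers $S^t$ converge to a rank one matrix $\mathbf{1}c$, which is the infinite product of the single compact set $\{S\}$.

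\emph{Step 1: two closed sets.} Suppose $\mathbb{G}$ is not rooted. We first show that there are two nonempty disjoint vertex sets $\mathcal{V}_1,\mathcal{V}_2$, each \emph{closed}, meaning every neighbor of a vertex in $\mathcal{V}_a$ lies in $\mathcal{V}_a$. For each vertex $v$, let $R(v)$ be the set of vertices from which there is a walk to $v$; each $R(v)$ is closed and nonempty. Every vertex has a neighbor, so following neighbors backward eventually revisits a vertex. Hence there is a \emph{minimal} closed set, one containing no proper nonempty closed subset, and it is strongly connected. Let $\mathcal{V}_1$ be such a minimal set. Any vertex of $\mathcal{V}_1$ reaches every vertex in $\mathcal{V}_1$. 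Since $\mathbb{G}$ is not rooted, that vertex fails to reach some vertex $w$, so $\mathcal{V}_1\not\subset R(w)$. Because $R(w)$ is closed, membership in it propagates backward, so if it contained any vertex of the strongly connected set $\mathcal{V}_1$ it would contain all of $\mathcal{V}_1$. Thus $R(w)\cap\mathcal{V}_1=\emptyset$. Now take $\mathcal{V}_2$ to be a minimal closed subset of $R(w)$; it is disjoint from $\mathcal{V}_1$.

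\emph{Step 2: translate closedness into the matrix.} By the convention that $(i,j)$ is an arc iff $s_{ji}>0$, a vertex $j\in\mathcal{V}_a$ satisfies $s_{jk}=0$ for every $k\notin\mathcal{V}_a$, since such a $k$ is not a neighbor of $j$. So row $j$ of $S$ is supported in $\mathcal{V}_a$. This property is preserved under products: if rows indexed by $\mathcal{V}_a$ are supported in $\mathcal{V}_a$ for both $S_1$ and $S_2$, the same holds for $S_1S_2$. Hence for every $t$ and every $j\in\mathcal{V}_1$, row $j$ of $S^t$ is supported in $\mathcal{V}_1$ and sums to one. Likewise each row of $S^t$ indexed by $\mathcal{V}_2$ is supported in $\mathcal{V}_2$.

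\emph{Step 3: conclude.} Pick $i\in\mathcal{V}_1$ and $j\in\mathcal{V}_2$, and let $x$ be the indicator vector of $\mathcal{V}_1$. Then $(S^tx)_i=1$ and $(S^tx)_j=0$ for all $t$. If $S^t\to\mathbf{1}c$, then $S^tx\to\mathbf{1}(cx)$, which would force $1=cx=0$, a contradiction. Hence $S$ is not convergent. The key observation for Lemma~\ref{lem:scramb_characterize} is that when both $i$ and $j$ are their own only neighbors, as in Lemma~\ref{lem:neigh}, the sets $\{i\},\{j\}$ are already closed. In that case Step 1 is trivial.

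The main obstacle is Step 1, showing that ``not rooted'' yields two disjoint closed classes. This amounts to the condensation-DAG fact that a rooted graph is exactly one with a unique source strongly connected component.
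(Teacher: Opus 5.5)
Your proof is correct, but it runs in the opposite direction from the paper's. The paper argues directly from the limit. Since $S^t\to\mathbf{1}c$ with $c$ a nonnegative row vector summing to one, some $c_i>0$, so column $i$ of $S^t$ is positive for all large $t$. That means there is a walk of length $t$ from $i$ to every vertex of $\mathbb{G}$, so $i$ is a root. The limit itself supplies the root, and nothing about the structure of non-rooted graphs is needed.

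You prove the contrapositive, and the real work is your Step~1: a non-rooted stochastic graph has two disjoint nonempty neighbor-closed vertex sets. That argument is sound. You assert without proof that a minimal closed set $M$ is strongly connected; this follows because for $v\in M$ the set $R(v)$ is a nonempty closed subset of $M$, hence equals $M$. Also, $\mathcal{V}_2=R(w)$ would already do, without passing to a minimal subset.

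In exchange for the extra work you get a quantitative obstruction that never uses the limit. With $x$ the indicator of $\mathcal{V}_1$, $S^tx$ has an entry $1$ and an entry $0$ for every $t$. Hence $\|S^t-\mathbf{1}c\|_\infty\ge 1/2$ for all $t$ and all $c$, and the same bound holds for any product of matrices whose graphs all leave $\mathcal{V}_1$ and $\mathcal{V}_2$ closed.

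Your closing remark is also apt. Where the paper uses this lemma, in Lemma~\ref{lem:scramb_characterize}, Lemma~\ref{lem:neigh} already makes $\{i\}$ and $\{j\}$ closed in the graph of $XY$. Your Steps~2 and~3 therefore show directly that $XY$ is not convergent, and the rootedness lemma could be skipped there altogether.
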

\begin{proof}
    Suppose~$S$ is convergent and let~$c \in \R^{1 \times n}$ be such that~$\lim_{t \to \infty} S^t = \mathbf{1}c$. Since~$S$ is stochastic,~$c$ has at least one positive entry, say at index~$i$. Then, then there is some~$t_0$ such that for all~$t > t_0$, all entries of column~$i$ of~$S^t$ are positive.
    
    In other words, if~$\mathbb{G}^t$ is the~$t$-fold composition of~$\mathbb{G}$ with itself, then for all~$t > t_0$, vertex~$i$ is a neighbor of all vertices in~$\mathbb{G}^t$. Since~$(i,j)$ is an arc in~$\mathbb{G}^t$ if and only if there is a walk of length~$t$ from~$i$ to~$j$ in~$\mathbb{G}$, we conclude that there is a path from~$i$ to every vertex in~$\mathbb{G}$, so~$\mathbb{G}$ is rooted.
\end{proof}

\vspace{1em}

\textit{Proof of Lemma~\ref{lem:scramb_characterize}:}
    Let~$X \in \stoc^{n \times n}$ be a stochastic matrix which is not scrambling, and let~$\mathbb{G}$ be its graph. Let~$\mathbb{G}'$ be the stochastic graph given by Lemma~\ref{lem:neigh}, and~$Y \in \stoc^{n \times n}$ be a stochastic matrix whose graph is~$\mathbb{G}'$. Then, by Lemma~\ref{lem:neigh}, the graph of the product~$XY$ is not rooted, so by Lemma~\ref{lem:convergent_is_rooted}~$XY$ is not convergent.
    \hspace*{\fill}~\QED

\subsection{Proof of Lemma~\ref{lem:expconverge}}

    Let~$P \in \R^{n-1 \times n}$ be a full rank matrix whose kernel is the span of~$\mathbf{1}$. Fix an arbitrary sequence~$S_1,S_2,\dots,$ in~$\mathcal{C}$, and let~$\bar S_1, \bar S_2, \dots \in \R^{n-1 \times n-1}$ be such that
    \begin{equation}
        PS_i = \bar S_i P, \quad i=1,2,\dots
    \end{equation}
    Let~$Q \in \R^{1 \times n}$ be such~$T = \begin{bmatrix} P' Q' \end{bmatrix}$ is invertible, then
    \begin{equation}
        TS_iT^{-1} = \begin{bmatrix}
            1 & \hat S_i \\
            0 & \bar S_i
        \end{bmatrix}
    \end{equation}
    for some~$\hat S_1, \hat S_2, \dots \in \R^{1 \times n-1}$. It may be verified that
    \begin{equation}
        TS_t \cdots S_1T^{-1} = \begin{bmatrix}
            1 & \hat S_1 + \sum_{i=2}^t \hat S_i \bar S_{i-1} \cdots \bar S_1 \\
            0 & \bar S_t \cdots \bar S_1
        \end{bmatrix}.
    \end{equation}
    Let~$\restnorm{\cdot} : \R^{n-1 \times n-1} \to \R$ be the norm defined in Lemma~\ref{lem:restnorm}. Then~\eqref{cond:semiexpconverge} implies that
    \begin{equation}
        \restnorm{\bar S_t \cdots \bar S_1} \le c \delta^t, \quad t=1,2,\dots,
    \end{equation}
    which by equivalence of norms implies that there is~$\bar c > 0$ such that
    \begin{equation}\label{eq:bar_infty_exp}
        \|\bar S_t \cdots \bar S_1\|_\infty \le \bar c \delta^t, \quad t=1,2,\dots
    \end{equation}
    We claim that this implies the convergence of~$u(t) = \hat S_1 + \sum_{i = 2}^t \hat S_i \bar S_{i-1} \cdots \bar S_1$. Indeed, fix~$r>s>0$, then
    \begin{align*}
        \|u(r) - u(s)\|_\infty &= \|\sum_{i=s+1}^r \hat S_i \bar S_{i-1} \cdots \bar S_1 \|_\infty \\
            &\le \sum_{i=s+1}^r \| \hat S_i \|_\infty \|\bar S_{i-1} \cdots \bar S_1 \|_\infty,
    \end{align*}
    where the inequality is due to the triangle inequality and submultiplicativity. Since~$\mathcal{C}$ is compact, the sequence~$\hat S_1, \hat S_2, \dots$ is bounded. Let~$\hat c = \sup_i \|\hat S_i\|_\infty$, then
    \begin{align*}
        \|u(r) - u(s)\|_\infty &\le  \hat c \sum_{i=s+1}^r \|\bar S_{i-1} \cdots \bar S_1 \|_\infty \\
        &\le \hat c \bar c \sum_{i=s+1}^r \delta^i.
    \end{align*}
    Finally, solving the sum on the right hand side gives
    \begin{equation}
        \|u(r) - u(s)\|_\infty \le \hat c \bar c \frac{\delta^{s+2} - \delta^{r+1}}{1-\delta} = \hat c \bar c \delta^s\frac{\delta^2 - \delta^{r-s+1}}{1-\delta}.
    \end{equation}
    Since~$r > s$ and~$0 < \delta < 1$,~$\delta^2 - \delta^{r-s+1} \le 1$, so
    \begin{equation}
        \|u(r) - u(s)\|_\infty \le \hat c \bar c\delta^s\frac{1}{1-\delta},
    \end{equation}
    and this shows that the sequence~$u(1),u(2),\dots$ is a Cauchy sequence and therefore convergent. Let~$u^* = \lim_{t \to \infty} u(t)$, then for every~$t > 0$,
    \begin{align}\label{eq:u_exp}
        \|u^* - u(t)\|_\infty &= \lim_{s \to \infty}\|u(s) - u(t)\|_\infty 
            \le \hat c \bar c \delta^{t} \frac{1}{1-\delta}.
    \end{align}
    Let~$d \in \R^{1 \times n}$ be such that
    \begin{equation}
        \mathbf{1}d = T^{-1}\begin{bmatrix}
            1 & u^* \\
            0 & 0
        \end{bmatrix}T.
    \end{equation}
    Combining~\eqref{eq:u_exp} and~\eqref{eq:bar_infty_exp} yields
    \begin{align*}
        \|S_t \cdots S_1 - \mathbf{1}d\|_\infty &\le \|T^{-1}T(S_t \cdots S_1 - \mathbf{1}d)T^{-1}T\|_\infty \\
        &\le \|T\|_\infty\|T^{-1}\|_\infty \left\| \begin{bmatrix}
            0 & u(t) - u^* \\ 0 & \bar S_t \cdots \bar S_1
        \end{bmatrix} \right\|_\infty \\
        &= \|T\|_\infty\|T^{-1}\|_\infty \max\{ \hat c \bar c \delta^t \frac{1}{1-\delta}, \bar c \delta^t \} \\
        &= \|T\|_\infty\|T^{-1}\|_\infty \max\{ \hat c \bar c \frac{1}{1-\delta}, \bar c \}  \delta^t,
    \end{align*}
    and this completes the proof.
    \hspace*{\fill}~\QED

%\section*{ACKNOWLEDGMENT}

%%%%%%%%%%%%%%%%%%%%%%%%%%%%%%%%%%%%%%%%%%%%%%%%%%%%%%%%%%%%%%%%%%%%%%%%%%%%%%%%

\bibliographystyle{IEEEtranS}
\bibliography{literature}

\end{document}